\newcommand{\setlabel}[2]{\def\@currentlabel{#2}\label{#1}}
\newtheorem{theorem}{\bf Theorem}[section]
\newcommand{\pd}[2]{\frac{\partial #1}{\partial #2}}
\newcommand{\bv}[1]{\boldsymbol{#1}}
\newcommand{\R}[1]{\mathbb{R}^{#1}}
\begin{document}

\thispagestyle{empty}

\newpage

\setcounter{page}{1}
\title{
  Biot--Savart law in the geometrical theory of dislocations
}

\author[1]{Shunsuke Kobayashi}
\author[1]{Ryuichi Tarumi}

\affil[1]{Graduate School of Engineering Science, Osaka University, 1-3 Machikaneyama-cho, Toyonaka, Osaka, 560-8531, Japan}
\date{}

\maketitle





\begin{abstract}
Universal mechanical principles may exist behind seemingly unrelated physical phenomena, providing novel insights into these phenomena.
This study sheds light on the geometrical theory of dislocations through an analogy with electromagnetics.
In this theory, solving Cartan's first structure equation is essential for connecting the dislocation density to the plastic deformation field of the dislocations.
The additional constraint of a divergence-free condition, derived from the Helmholtz decomposition, forms the governing equations that mirror Amp\`ere's and Gauss' law in electromagnetics.
This allows for the analytical integration of the equations using the Biot--Savart law.
The plastic deformation fields of screw and edge dislocations obtained through this process form both a vortex and an orthogonal coordinate system on the cross-section perpendicular to the dislocation line.
This orthogonality is rooted in the conformal property of the corresponding complex function that satisfies the Cauchy--Riemann equations, leading to the complex potential of plastic deformation.
We validate the results through a comparison with the classical dislocation theory.
The incompatibility tensor is crucial in the generation of the mechanical field.
These findings reveal a profound unification of dislocation theories, electromagnetics, and complex functions through their underlying mathematical parallels.
\end{abstract}

\maketitle
\section{Introduction}

Mathematical universality often underlies the diverse physical phenomena governed by distinct mechanical principles.
The characteristics of the governing differential equations, along with the associated symmetry and conservation laws, intricately connected to the universality of the mechanical system \cite{olver_applications_1986}.
This connection results in remarkable similarities among seemingly unrelated physical phenomena, such as dislocation and vortex \cite{marcinkowski_correspondence_1989,marcinkowski_similarities_1992,lazar_correspondence_2004,silbermann_analogies_2017}, electrostatics and porus media \cite{bar-sinai_geometric_2020}, and elasticity and fractons in quantum systems \cite{pretko_fracton-elasticity_2018,pretko_symmetry-enriched_2018,pretko_crystal--fracton_2019,gromov_colloquium_2024, sun_fractional_2021}.
This approach proves to be invaluable in emerging academic fields, particularly when the governing equations are newly established, and the nature of their solutions remains unclear.
The mechanics of lattice defects in crystalline materials have been subject to extensive theoretical analysis owing to their close connection with structural materials \cite{hull_introduction_2011, anderson_theory_2017}.
This can be traced back to Volterra's classification of dislocations and disclinations \cite{volterra_sur_1907}.
From the 1950s, Kondo \cite{kondo_non-riemannian_1955, kondo_geometry_1955}, Bilby \textit{et al.} \cite{bilby_continuous_1955}, Kr\"oner and Seeger \cite{kroner_nicht-lineare_1959,kroner_allgemeine_1959}, and Amari \cite{amari_primary_1962} proposed a theoretical formulation in non-Euclidean geometry.
More recently, a comprehensive geometrical formulation has been developed using the Riemann--Cartan manifold by Yavari and Goriely  \cite{yavari_riemann--cartan_2012,yavari_riemann--cartan_2013,yavari_geometry_2014}.
In this framework, the intermediate configuration in which only plastic deformation occurs without elastic relaxation, is the most crucial mathematical structure.
This virtual state is determined by solving Cartan's first structure equation (Cartan's equation) for a given dislocation density distribution.
Therefore, analysis of Cartan's equation is essential in gaining profound insights into lattice defects from a geometrical perspective.
In a previous study, we successfully demonstrated that dislocation stress fields result from geometrical frustration within the dislocation core \cite{kobayashi_geometrical_2024}.
This result provides strong evidence for the long-standing mathematical hypothesis --- the duality between curvature and stress \cite{schaefer_spannungsfunktionen_1953,minagawa_riemannian_1962,amari_dual_1963}.
The most critical issue is obtaining an analytical solution to the Cartan's equation.
A mathematically rigorous solution reveals the inherent universality within the mechanics of lattice defects.

From this perspective, various previous studies have delved into the analysis of Cartan's first structure equation, or mathematically equivalent equations, to determine plastic deformation fields.
In the context of geometric dislocation theory, Yavari and Goriely \cite{yavari_riemann--cartan_2012} and Clayton \cite{clayton_defects_2015} have utilised the semi-inverse method, and Edelen \cite{edelen_gauge_1988} and Acharya \cite{acharya_model_2001} employed the homotopy operator to obtain the analytical expression of the plastic deformation field for axially symmetric distributions of dislocations.
In dislocation field theories, analyses on partial differential equations that are essentially equivalent to Cartan's first structure equation have been performed.
Acharya \textit{et al.} \cite{acharya_driving_2003,roy_finite_2005} derived the equations by applying Helmholtz decomposition and incorporating the divergence-free condition for the plastic deformation field, resulting in analytical solutions for a single dislocation \cite{ acharya_structure_2019}.
Lazar \textit{et al.} developed a field theory in which the strain gradient was considered, leading to analytical solutions for the plastic deformation field of a single dislocation \cite{lazar_elastoplastic_2002,lazar_nonsingular_2003}.
Moreover, a general solution to the elastic deformation field was achieved by de Wit using Green's function method \cite{dewit_theory_1973,dewit_theory_1973-1}.
Additionally, numerical solutions for an arbitrary spatial distribution of dislocations have been obtained by Roy and Acharya \cite{roy_finite_2005}, Arora \textit{et al.} \cite{arora_finite_2020} and in our previous study \cite{kobayashi_geometrical_2024}.
However, understanding the mathematical structure of the plastic deformation field can be a challenging task.
Previous studies have primarily focused on investigating the characteristics of strain and stress fields of dislocations by using the plastic deformation field, neglecting the geometric aspects of the plastic deformation field itself.
In summary, the previously utilised analysis methods for Cartan's first structure equation and equivalent governing equation have failed to provide universal analytical solutions essential for a comprehensive understanding of the plastic deformation field.
Therefore, a new perspective on the mathematical structure of dislocations and their plastic deformation fields can be obtained by constructing new analytical solutions.

This study presents a novel perspective on the theory of lattice defects by integrating principles from electromagnetics.
Although this approach may seem unconventional, it is rooted in the mathematical universality shared by theories of lattice defects in the Riemann--Cartan manifold and steady-state electromagnetics.
Moreover, we discovered that the equation aligns with the Cauchy--Riemann equations in complex function theory.
In the following section, we provide a concise overview of the geometrical theory of lattice defects, highlighting the crucial roles of Cartan's equation and Helmholtz decomposition.
In section 3, we analytically integrate Cartan's equation for plastic deformation caused by dislocations.
First, we demonstrate that Cartan's equation, when combined with Helmholtz decomposition, is mathematically equivalent to Amp\`ere's law and Gauss' law in electromagnetism.
Furthermore, we demonstrate the application of analytical solutions to these electromagnetic equations, particularly the Biot--Savart law, in solving the plastic deformation problem.
In section 4, we further explore this analysis by examining its relationship with the complex function theory.
We establish that Cartan's equation is equivalent to the Cauchy--Riemann equation, indicating that the mathematical essence of plastic deformation can be interpreted as a conformal map.
Moreover, we showcase the construction of a complex potential for plastic deformation caused by dislocations.
Finally, we reveal that the linearised stress fields, derived from the analytical plastic deformation fields, align accurately with those anticipated by the classical Volterra dislocation theory.
This result quantitatively validates the mathematical analysis presented in this study.
Finally, section 5 concludes the paper.

\section{Theory of dislocations on Riemann--Cartan manifold}
\label{sec:theory of dislocations on}
\subsection{Construction of the intermediate configuration}
The geometrical theory of dislocations commences by introducing three distinct configurations in kinematics: the reference $\mathcal{R}=(\mathcal{M}, \bv{g}_\mathcal{R}, \nabla_\mathcal{R})$, intermediate $\mathcal{B}=(\mathcal{M}, \bv{g}_\mathcal{B}, \nabla_\mathcal{B})$, and current $\mathcal{C}=(\mathcal{M}, \bv{g}_\mathcal{C}, \nabla_\mathcal{C})$ configurations \cite{yavari_riemann--cartan_2012, kobayashi_geometrical_2024}.
Each configuration is represented as a Riemann--Cartan manifold, that is, a material manifold $\mathcal{M}$ equipped with a Riemannian metric and an affine connection.
A noteworthy mathematical feature is the utilisation of diffeomorphisms of the material manifold $\mathcal{M}$, which distinguishes these configurations based on the metric and connection they possess.
The intermediate configuration is particularly crucial as it represents a virtual state that only contains information regarding plastic deformation and does not exist within the conventional Euclidean space $\mathbb{R}^3$.
In contrast, the reference and current configurations exist in $\mathbb{R}^3$ owing to the embeddings $\bv{x}$ and $\bv{y}$, respectively.
Consequently, the affine connections of these configurations, $\nabla_{\mathcal{R}}$ and $\nabla_{\mathcal{C}}$, are identified with the Euclidean connection.
Following the standard theory of elasticity, these embedding maps are related through the displacement $\bv{u}$ such that $\bv{y}=\bv{x}+\bv{u}$.
The presentation of this theory follows a standard setup outlined herein.
Unless otherwise specified, we will use the Cartesian coordinate system.
The corresponding orthonormal bases and their dual on the reference configuration $\mathcal{R}$ are denoted as $\partial /\partial \bv{x}^i$ and $d\bv{x}^i$ ($i=1,2,3$).

The geometrical construction of the intermediate configuration $\mathcal{B}$ requires the determination of the metric $\bv{g}_\mathcal{B}$ and connection $\nabla_\mathcal{B}$ for a given distribution of dislocations.
The key concept in this theoretical framework is the identification between the dislocation density tensor $\bv{\alpha}$ and the torsion 2-form $\bv{\tau}^i$ which is inherent in the connection $\nabla_\mathcal{B}$ \cite{kondo_non-riemannian_1955,kondo_geometry_1955, bilby_continuous_1955,kroner_allgemeine_1959,kroner_nicht-lineare_1959,amari_primary_1962,yavari_riemann--cartan_2012}.
Consider a scenario in which dislocations are present in a continuous medium, with their distribution defined by the dislocation density tensor $\bv{\alpha}$ in the reference configuration.
For the Burgers vector $\bv{b}=b^i\partial/\partial \bv{x}^i$ and tangent of the dislocation line $\bv{n}=n_j d\bv{x}^j$, we have \cite{nye_geometrical_1953,yavari_riemann--cartan_2012} 
\begin{align}\label{eq:DislocationDensity-Torsion}
    \bv{\alpha}=\bv{b}\otimes\bv{n},
    \quad
    \bv{\tau}^i=*\bv{\alpha}^i,
\end{align}
where $*$ represents the Hodge star operator with respect to the reference metric $\bv{g}_\mathcal{R}$ and $\bv{\alpha}^i= b^i\bv{n}$ represents the $i$-th component of the dislocation density tensor.
In the geometric dislocation theory, the Weitzenb\"ock connection serves as the connection of the intermediate configuration $\nabla_\mathcal{B}$ \cite{yavari_riemann--cartan_2012, yavari_geometry_2014}.
The geometric compatibility of the Weitzenbo\"ck connection elucidates the relationship between torsion and the orthonormal dual bases $\bv{\vartheta}^i$ of the intermediate configuration as follows:
\begin{align}
  \label{eq:cartan's first vartheta}
    d\bv{\vartheta}^i=\bv{\tau}^i,
\end{align}
where $d$ denotes the exterior derivative.
The aforementioned equation represents Cartan's first structure equation \cite{tu_differential_2017}.
Through the diffeomorphism of the material manifold $\mathcal{M}$, the dual bases are connected by a linear transformation each other.
Consequently, the dual bases $\bv{\vartheta}^i$ can be represented as $\bv{\vartheta}^i=(F_p)^i_j d\bv{x}^j$, where $\bv{F}_p$ denotes the plastic deformation gradient.
By substituting this into equation (\ref{eq:cartan's first vartheta}), the matrix $\bv{F}_p$ can be determined by integration.
The challenge lies in the lack of uniqueness of the solution of equation (\ref{eq:cartan's first vartheta}), which can be resolved by applying the Helmholtz decomposition to the dual bases $\bv{\vartheta}^i$.
Based on previous studies \cite{wenzelburger_kinematic_1998,kobayashi_geometrical_2024}, the dual bases $\bv{\vartheta}=(\bv{\vartheta}^1, \bv{\vartheta}^2, \bv{\vartheta}^3)^T$ of the intermediate configuration are defined from the bundle isomorphism between the tangent bundle $T\mathcal{M}$ and product bundle $\mathcal{M}\times \R{3}$.
Consequently, the bases $\bv{\vartheta}$ can be interpreted as an $\mathbb{R}^3$-valued 1-form that undergoes Helmholtz decomposition, such that $\bv{\vartheta}^i=d\bv{\psi}^i+\bv{\Theta}^i$ \cite{wenzelburger_kinematic_1998,kobayashi_geometrical_2024}, where $\bv{\psi}^i$ and $\bv{\Theta}^i$ are referred to as the exact and dual exact forms, respectively.
The cause of the non-uniqueness of equation (\ref{eq:cartan's first vartheta}) is the exact form $d\bv{\psi}^i$ as its exterior derivative vanishes identically: $dd\bv{\psi}^i\equiv 0$.
However, this form does not contribute to torsion $\bv{\tau}^i$.
Therefore, without loss of generality, the exact form can be set as an identity, such that $\bv{\psi}^i=\bv{x}^i$ \cite{kobayashi_geometrical_2024}.
Consequently, the plastic deformation gradient is solely expressed using only the dual exact form $\bv{\Theta}$ as follows: 
\begin{align}
    \label{eq:Helmholtz}
    \bv{F}_p=\bv{I} + \bv{\Theta}
    = \begin{pmatrix}
      1&0&0\\
      0&1&0\\
      0&0&1
    \end{pmatrix}
    +
    \begin{pmatrix}
      \Theta^1_1&\Theta^1_2&\Theta^1_3\\
      \Theta^2_1&\Theta^2_2&\Theta^2_3\\
      \Theta^3_1&\Theta^3_2&\Theta^3_3
    \end{pmatrix},
\end{align}
where $\bv{\Theta}=(\bv{\Theta}^1, \bv{\Theta}^2, \boldsymbol{\Theta}^3)^T$.
$\bv{\Theta}$ is referred to as the plastic displacement gradient because it corresponds to the displacement gradient $\nabla \bv{u}$ in conventional elasticity theory.
Evidently, $\bv{\Theta}$ primarily contributes to the plastic deformation gradient.
As will be elaborated in the following section, it plays a crucial role in the present geometrical theory of dislocations.
For the exterior derivative $d$ and codifferential $-* d*$, this results in Cartan's first structure equation $d\bv{\Theta}^i=\bv{\tau}^i$ along with the divergence-free condition $-*d* \bv{\Theta}^i=0$ ensuring that $\bv{\Theta}^i$ is the dual exact form \cite{wenzelburger_kinematic_1998, kobayashi_geometrical_2024}.
The divergence-free condition provides a new constraint on $\bv{\Theta}^i$ that results in a unique solution to Cartan's equation.
Furthermore, the exterior and codifferential of the 1-form $\bv{\Theta}^i$ can be expressed as the curl and divergence of a vector-valued function, respectively \cite{schwarz_hodge_1995}.
Consequently, these differential equations can be reformulated using vector analysis as follows:
\begin{align}
  \label{eq:Cartan}
  \nabla \times \bv{\Theta}^i=\bv{\alpha}^i,
  \quad
  \nabla \cdot \bv{\Theta}^i=0,
\end{align}
where $\nabla\times$ and $\nabla \cdot $ represent the curl and divergence operators of $\mathbb{R}^3$.
The same governing equations linking a dislocation density tensor and plastic deformation were derived in the field theory of dislocations by utilising Helmholtz decomposition \cite{acharya_model_2001,arora_finite_2020,acharya_structure_2019}.
After solving the aforementioned equation, the Riemannian metric can be determined as follows:
\begin{align}
\label{eq:IntermediateMetric}
    \bv{g}_\mathcal{B}=\delta_{ij}\bv{\vartheta}^i\otimes \bv{\vartheta}^j=
    \delta_{ij}(F_p)^i_k (F_p)^j_l d\bv{x}^k\otimes d\bv{x}^l.
\end{align}
Consequently, the intermediate configuration $\mathcal{B}$ can be promptly obtained after integrating equation (\ref{eq:Cartan}).

\subsection{Elastic embedding to the current configuration}
In the geometric dislocation theory, the second step involves determining the current configuration.
This process entails identifying an embedding of the intermediate configuration $\mathcal{B}$ into the Euclidean space $\mathbb{R}^3$, which corresponds to elastic deformation in mechanics.
Consequently, the relationship between the dual bases of the intermediate and current configurations can be expressed as follows: $d\bv{y}^i=(F_e)^i_j \bv{\vartheta}^j=(F_t)^i_j d\bv{x}^j$.
where $\bv{F}_t$ and $\bv{F}_e$ denote the total and elastic components of the deformation gradient, respectively.
According to multiplicative decomposition, we have $\bv{F}_t=\bv{F}_e \bv{F}_p=\pd{\bv{y}}{\bv{x}}=\bv{I}+\nabla\bv{u}$, where $\nabla\bv{u}$ represents the displacement gradient and $\bv{I}$ denotes the $3\times 3$ identity matrix.
The Riemannian metrics of the current configurations are defined as follows:
\begin{align}
    \bv{g}_\mathcal{C}=\delta_{ij}d\bv{y}^i\otimes d\bv{y}^j
    =\delta_{ij} (F_t)^i_k (F_t)^j_l d\bv{x}^k\otimes d\bv{x}^l.
\end{align}

Following the standard theories of geometrical elasticity, the elastic Green--Lagrange strain $\bv{E}$ is characterised by the difference in the Riemannian metrics between the two configurations: $\bv{E} = (\bv{g}_\mathcal{C} - \bv{g}_\mathcal{B})/2$ \cite{marsden_mathematical_1984}.
Subsequently, the strain energy density of the St. Venant--Kirchhoff hyperelasticity and the second Piola--Kirchhoff stress tensor $\bv{S}$ was introduced \cite{grubic_equations_2014}:
\begin{align}\label{eq:strainenergy}
  \mathcal{W}=\frac{1}{2} \bv{E}:\bv{C}:\bv{E}\sqrt{\det \bv{g}_\mathcal{B}},\quad \bv{S} = \pd{\mathcal{W}}{\bv{E}}=\bv{C}:\bv{E}\sqrt{\det \bv{g}_\mathcal{B}},
\end{align}
where $\bv{C}$ denotes the 4th-rank elastic stiffness tensor.
For an isotropic medium with the Poisson's ratio $\nu$ and shear modulus $\mu$, the components are denoted using the inverse of the metric tensor $g^{ij}_\mathcal{B}$ as $C^{ijkl}=\frac{2\nu\mu}{1-2\nu} g^{ij}_\mathcal{B} g^{kl}_\mathcal{B} + \mu(g^{ik}_\mathcal{B} g^{jl}_\mathcal{B}+g^{il}_\mathcal{B} g^{jk}_\mathcal{B})$.
Given a specific plastic displacement gradient $\bv{\Theta}^i$, the embedding $\bv{y}$ was determined by solving the following stress equilibrium equation \cite{grubic_equations_2014}
\begin{align}\label{eq:stressequil}
  \nabla\cdot \bv{P}=0.
\end{align}
where $\bv{P}=\bv{F} \cdot \bv{S}$ represents the first Piola--Kirchhoff stress tensor.
This is known as the stress equilibrium equation.

\section{Analytical integration for Cartan's equation}

\subsection{Biot--Savart law: electromagnetics and dislocations}
\label{sec: biotsavartlaw}

As elucidated in the preceding section, the fundamental concept of the geometrical theory of dislocation involves the introduction of an intermediate configuration represented as a Riemann--Cartan manifold.
The mathematical formulation of the intermediate configuration is established by solving equation (\ref{eq:Cartan}) for a given dislocation distribution $\bv{\alpha}$.
The subsequent analysis of elastic deformation involves embedding the intermediate configuration into Euclidean space $\mathbb{R}^3$.
Despite incorporating a geometric process, the transition of the Riemannian metric from the intermediate to the reference configuration results in an analysis that closely resembles that of nonlinear elasticity theory.
Hence, the primary challenge in geometrical analysis lies in the construction of the intermediate configuration, which is performed by utilising Cartan's equation (\ref{eq:Cartan})${}_1$ and divergence-free condition (\ref{eq:Cartan})${}_2$.
In a previous study, we proposed a numerical method for an arbitrary configuration of dislocations \cite{kobayashi_geometrical_2024}, although an analytical solution for such configurations has not yet been reported.
The present focus shifts to the mathematical properties of the mathematical features of equation (\ref{eq:Cartan}).
Because these are linear partial differential equations, analytical solutions $\bv{\Theta}^i$ can be anticipated.
Cartan's equation further suggests that the presence of non-vanishing torsion results in the formation of a circular field in $\bv{\Theta}^i$ that is free from divergence.
Surprisingly, a field of physics exists in which the governing equations are precisely akin to this mathematical system: electromagnetism.

Static magnetic flux density fields induced by steady-state electric current are treated in magnetostatics.
This physical field is well-established and widely recognised; however, a brief overview is provided here for comparison with the geometric dislocation theory.
Let $\bv{x}=(x,y,z)$ and $\bv{\xi}=(\xi, \eta, \zeta)$ represents points in $\mathbb{R}^3$, and $\bv{B}=\bv{B}(\bv{x})$ denotes the magnetic flux density induced by the electric current density $\bv{J}=\bv{J}(\bv{x})$.
We assume that $\bv{B}$ and $\bv{J}$ are in a steady state, meaning they do not vary with time.
Therefore, the differential form of Amp\`ere's law and Gauss' law for the magnetic flux density $\bv{B}$ are expressed as follows:
\begin{align}
  \label{eq:gauss-ampere}
    \nabla\times \bv{B}= \mu_0\bv{J},\quad
    \nabla\cdot \bv{B}=0.
\end{align}
The analytical solutions to these equations are provided by the following theorem \cite{griffiths_introduction_2023}:

\begin{theorem}[Biot--Savart law of electromagnetics]\label{th:BS-EM}
Consider a steady-state electric current $\bv{J}(\bv{\xi})$ in a vacuum.
The resulting steady-state magnetic flux density $\bv{B}(\bv{x})$ surrounding the electric current is expressed as follows:
\begin{align}\label{eq:BS}
  \bv{B}(\bv{x}) = 
  \frac{\mu_0}{4\pi} \int_{\mathbb{R}^3}
  \frac{\bv{J}(\bv{\xi}) \times (\bv{x}-\bv{\xi})}{\|\bv{x}-\bv{\xi}\|^3} dV,
\end{align}
where $\mu_0$ represents the permeability of the vacuum, and integration is performed for $\bv{\xi}$ over $\mathbb{R}^3$.
Moreover, let $\bv{A}(\bv{x})$ denote the vector potential defined by
\begin{align}\label{eq:VectorPotential}
  \bv{A}(\bv{x}) = 
  \frac{\mu_0}{4\pi} \int_{\mathbb{R}^3}
  \frac{\bv{J}(\bv{\xi}) }{\|\bv{x}-\bv{\xi}\|} dV.
\end{align}
The magnetic flux density is then derived from this potential, such that $\bv{B} = \nabla \times \bv{A}$.
\end{theorem}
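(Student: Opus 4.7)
The plan is to exploit the divergence-free condition $\nabla\cdot\bv{B}=0$ to rewrite $\bv{B}$ as the curl of a vector potential $\bv{A}$, reduce Amp\`ere's law to a vector Poisson equation for $\bv{A}$ whose fundamental solution yields equation (\ref{eq:VectorPotential}), and finally recover the Biot--Savart integrand in equation (\ref{eq:BS}) by taking a curl. Throughout I assume $\bv{J}$ is compactly supported (or decays sufficiently rapidly) so that every integral converges and differentiation under the integral sign is permissible; this is the natural hypothesis for a physical steady-state current distribution.

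First I would invoke the Poincar\'e lemma on $\mathbb{R}^3$: because $\nabla\cdot\bv{B}=0$ and $\mathbb{R}^3$ is contractible, there exists a vector field $\bv{A}$ with $\bv{B}=\nabla\times\bv{A}$. This potential is determined only up to a gradient, so I would fix the Coulomb gauge $\nabla\cdot\bv{A}=0$. Substituting $\bv{B}=\nabla\times\bv{A}$ into Amp\`ere's law and applying the identity $\nabla\times(\nabla\times\bv{A})=\nabla(\nabla\cdot\bv{A})-\nabla^2\bv{A}$ reduces equation (\ref{eq:gauss-ampere})$_1$ to the componentwise vector Poisson equation $\nabla^2\bv{A}=-\mu_0\bv{J}$.

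Next I would solve this Poisson equation using the Newtonian kernel. The crucial distributional identity $-\nabla^2\bigl(1/(4\pi\|\bv{x}-\bv{\xi}\|)\bigr)=\delta(\bv{x}-\bv{\xi})$ yields
\begin{align}
\bv{A}(\bv{x}) = \frac{\mu_0}{4\pi}\int_{\R{3}}\frac{\bv{J}(\bv{\xi})}{\|\bv{x}-\bv{\xi}\|}\,dV,
\end{align}
which is precisely equation (\ref{eq:VectorPotential}). I would then verify internal consistency by checking that this $\bv{A}$ actually lies in Coulomb gauge: moving $\nabla_{\bv{x}}\cdot$ inside the integral, rewriting $\nabla_{\bv{x}}(1/\|\bv{x}-\bv{\xi}\|)=-\nabla_{\bv{\xi}}(1/\|\bv{x}-\bv{\xi}\|)$, and integrating by parts in $\bv{\xi}$ transfers the derivative onto $\bv{J}$, producing $\nabla_{\bv{\xi}}\cdot\bv{J}=0$ by the steady-state continuity equation, so the gauge choice is self-consistent.

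Finally I would compute $\bv{B}=\nabla\times\bv{A}$ by differentiating under the integral sign and using $\nabla_{\bv{x}}(1/\|\bv{x}-\bv{\xi}\|)=-(\bv{x}-\bv{\xi})/\|\bv{x}-\bv{\xi}\|^3$ together with the identity $\nabla\times(f\bv{J})=\nabla f\times\bv{J}$ when $\bv{J}$ depends only on $\bv{\xi}$; this delivers equation (\ref{eq:BS}) directly. The main technical obstacle is justifying differentiation under the integral at points $\bv{x}$ lying in the support of $\bv{J}$, since the integrand acquires a $1/\|\bv{x}-\bv{\xi}\|^2$ singularity. The standard remedy is an excision argument: remove a ball of radius $\varepsilon$ around $\bv{x}$, differentiate the resulting smooth integrand without scruple, and show that the boundary contribution is $O(\varepsilon)$ as $\varepsilon\to 0$. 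A secondary, milder, issue is confirming convergence at infinity, which is immediate under the compact-support assumption on $\bv{J}$.
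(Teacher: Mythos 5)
Your proposal is correct, but it takes a different route from the paper. The paper's proof is a one-line verification: it simply asserts that substituting equations (\ref{eq:BS}) and (\ref{eq:VectorPotential}) into equation (\ref{eq:gauss-ampere}) confirms the theorem, deferring all details to Griffiths. You instead give a constructive derivation: Poincar\'e lemma to obtain a potential, Coulomb gauge to reduce Amp\`ere's law to the vector Poisson equation $\nabla^2\bv{A}=-\mu_0\bv{J}$, the Newtonian kernel to produce equation (\ref{eq:VectorPotential}), and a curl computation to recover equation (\ref{eq:BS}). Your approach buys more: it explains \emph{why} the vector potential has this form rather than pulling it out of the air, it makes explicit the role of the steady-state continuity condition $\nabla\cdot\bv{J}=0$ (which is in fact necessary for Amp\`ere's law to be solvable at all, since the curl on the left is identically divergence-free), and it confronts the analytic issues --- differentiation under the integral sign near the singularity and decay at infinity --- that the paper's citation silently absorbs. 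What the paper's route buys is brevity and directness: a pure verification needs no gauge-fixing or excision argument, only the distributional identity for the Laplacian of the Newtonian kernel, and it is arguably the more honest match to how the theorem is used later (Theorem \ref{th:BS} is obtained by the same substitute-and-check logic). Both arguments are sound; yours is the stronger self-contained proof.
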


\begin{proof}
The theorem is validated by substituting equations (\ref{eq:BS}) and (\ref{eq:VectorPotential}) into equation (\ref{eq:gauss-ampere}).
Further details can be obtained from \cite{griffiths_introduction_2023}.
\end{proof}

The Biot--Savart law is derived from the vector potential $\bv{A}(\bv{x})$ that is determined by the current distribution $\bv{J}(\bv{\xi})$.
By taking the curl of the vector potential and utilising the properties of the curl and the gradient of the inverse distance, the expression of the magnetic flux density $\bv{B}$ owing to the current-carrying wire can be derived, validating the Biot--Savart law.
Notably, the wire does not need to be a straight line.
The mathematical structure of magnetostatics, as presented in equation (\ref{eq:gauss-ampere}) is identical to that of the geometrical theory of dislocations expressed in equation (\ref{eq:Cartan}).
This similarity suggests that the theory of dislocations in the Riemann--Cartan manifold is analogous to that of steady-state electromagnetics.
The analytical solution to equation (\ref{eq:Cartan}) is obtained using the Biot--Savart law presented in equation (\ref{eq:BS}).
This hypothesis was validated through direct calculations (see equations (\ref{eq:bs cartan}) and (\ref{eq:Cartan})).
These findings can be summarised by utilising the following theorem.

\begin{theorem}[Biot--Savart law of dislocation]\label{th:BS}
Let $\bv{\tau}^i(\bv{\xi})=\ast \bv{\alpha}^i(\bv{\xi})$ be the torsion 2-form resulting from dislocations in an infinite medium.
Then, the analytical integration of the dual exact form of the plastic displacement gradient $\bv{\Theta}^i(\bv{x})$ can be expressed as follows:
\begin{align}
  \label{eq:bs cartan}
   \bv{\Theta}^i(\bv{x}) =
  \frac{1}{4\pi}
  \int_{\mathbb{R}^3}
  \frac{\bv{\alpha}^i(\bv{\bv{\xi}})\times (\bv{x}-\bv{\xi})}{\|\bv{x}-\bv{\xi}\|^3} dV,
\end{align}
for $i=1,2,3$.
Volume integration is performed for $\bv{\xi}$ over $\mathbb{R}^3$.
Moreover, the vector potential of the plastic displacement gradient is obtained as follows:
\begin{align}
\label{eq:VectorPotential_cartan}
   \bv{\Psi}^i(\bv{x}) =
  \frac{1}{4\pi}
  \int_{\mathbb{R}^3}
  \frac{\bv{\alpha}^i(\bv{\bv{\xi}})}{\|\bv{x}-\bv{\xi}\|} dV,
\end{align}
satisfying $\bv{\Theta}^i=\nabla \times \bv{\Psi}^i$.
\end{theorem}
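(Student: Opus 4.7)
The plan is to verify equations (\ref{eq:bs cartan}) and (\ref{eq:VectorPotential_cartan}) by direct substitution into the governing system (\ref{eq:Cartan}), exploiting the formal identity with magnetostatics under the correspondence $\bv{\Theta}^i \leftrightarrow \bv{B}$ and $\bv{\alpha}^i \leftrightarrow \mu_0 \bv{J}$. Because the system (\ref{eq:Cartan}) decouples across the index $i$, the three cases reduce to independent copies of the electromagnetic problem already handled by Theorem \ref{th:BS-EM}, so it is enough to carry out the calculation for a generic index $i$ and transcribe the rest.

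The concrete computation proceeds in two stages. First, to pass from the potential $\bv{\Psi}^i$ to $\bv{\Theta}^i$ I would move the curl inside the integral in (\ref{eq:VectorPotential_cartan}); since $\bv{\alpha}^i$ depends only on $\bv{\xi}$, applying
\[
\nabla_{\bv{x}}\frac{1}{\|\bv{x}-\bv{\xi}\|}=-\frac{\bv{x}-\bv{\xi}}{\|\bv{x}-\bv{\xi}\|^3}
\]
directly reproduces (\ref{eq:bs cartan}) and confirms $\bv{\Theta}^i=\nabla\times\bv{\Psi}^i$. Second, I would verify the two conditions (\ref{eq:Cartan}). The divergence-free condition is immediate from $\nabla\cdot(\nabla\times\bv{\Psi}^i)\equiv 0$. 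For Cartan's equation I would invoke the vector-Laplacian identity $\nabla\times(\nabla\times\bv{\Psi}^i)=\nabla(\nabla\cdot\bv{\Psi}^i)-\Delta\bv{\Psi}^i$ together with the Newtonian-kernel identity $\Delta_{\bv{x}}(1/\|\bv{x}-\bv{\xi}\|)=-4\pi\delta(\bv{x}-\bv{\xi})$, which gives $-\Delta\bv{\Psi}^i=\bv{\alpha}^i$. The remaining term $\nabla(\nabla\cdot\bv{\Psi}^i)$ is dispatched by integrating by parts in $\bv{\xi}$ and invoking the dislocation continuity condition $\nabla\cdot\bv{\alpha}^i=0$, the geometric counterpart of steady-state current conservation.

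The main obstacle is this last point: the interchange of differential operators with the singular volume integral is not justified classically at $\bv{x}=\bv{\xi}$, so the whole argument must be read distributionally, and the hypothesis $\nabla\cdot\bv{\alpha}^i=0$ has to be imported as a consistency requirement. Geometrically, this reflects the Bianchi-type identity forced on the torsion 2-form $\bv{\tau}^i=*\bv{\alpha}^i$; physically, it expresses that dislocation lines are closed or extend to infinity. Once this is in place, the remainder of the proof is a line-by-line transcription of the magnetostatic argument underlying Theorem \ref{th:BS-EM}, repeated for $i=1,2,3$.
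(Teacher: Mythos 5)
Your proposal takes essentially the same route as the paper: observe that for each fixed $i$ the system (\ref{eq:Cartan}) is the magnetostatic system (\ref{eq:gauss-ampere}) under the correspondence $\bv{J}\to\bv{\alpha}^i$, $\bv{B}\to\bv{\Theta}^i$, $\mu_0\to 1$, with the index-wise decoupling justified by linearity, and then invoke Theorem \ref{th:BS-EM} directly. You additionally spell out the verification that the paper delegates to the textbook reference, and your explicit observation that the curl equation only closes when $\nabla\cdot\bv{\alpha}^i=0$ (the analogue of steady-state current conservation, here guaranteed by $dd\bv{\vartheta}^i=0$, i.e.\ dislocation lines do not terminate in the bulk) is a hypothesis the paper leaves implicit.
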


\begin{proof}
  While the dislocation density $\bv{\alpha}(\bv{\xi})$ is an $\mathbb{R}^3$-valued 1-form, the electric current $\bv{J}(\bv{\xi})$ in theorem \ref{th:BS-EM} is a 1-form which takes the value in $\mathbb{R}$.
  A similar relationship exists between the magnetic flux density $\bv{B}$ and $\bv{\Theta}$.
  In addition, the linearity of equation (\ref{eq:Cartan}) allows for the complete decoupling of the components of the fields $\bv{\alpha}^i$ and $\bv{\Theta}^i$ ($i=1,2,3$).
  Therefore, by replacing $\bv{J}$ and $\bv{B}$ with $\bv{\alpha}^i$ and $\bv{\Theta}^i$, and substituting $\mu_0=1$, Cartan's equation (\ref{eq:Cartan}) for fixed $i$ can be recovered from the governing equations in magnetostatics (\ref{eq:gauss-ampere}).
  The direct application of Theorem \ref{th:BS-EM} yields equations (\ref{eq:bs cartan}), (\ref{eq:VectorPotential_cartan}), and $\bv{\Theta}^i=\nabla\times\bv{\Psi}^i$.
\end{proof}

Notably, theorem \ref{th:BS} is applicable not only to a single dislocation but also to multiple dislocations, including edge and screw dislocations, whose distribution is characterised by $\bv{\tau}^i =*\bv{\alpha}$.
The superposition of solutions is made possible by the linearity of equation (\ref{eq:Cartan}).
The boundary conditions can also be addressed using the method of images \cite{griffiths_introduction_2023}.
By leveraging the Biot--Savart law of dislocations and Helmholtz decomposition, the plastic deformation gradient $\bv{F}_p$ can be obtained for an arbitrary configuration of dislocations, leading to the analytical expression for the intermediate configuration $\mathcal{B}$ of the Riemann--Cartan manifold.
Previous studies have utilised various methods for analysing Cartan's first structure equation, such as the Homotopy operator \cite{acharya_model_2001,edelen_gauge_1988}, the semi-inverse method \cite{yavari_riemann--cartan_2012,yavari_geometry_2014,clayton_defects_2015}, and variational methods \cite{arora_finite_2020,kobayashi_geometrical_2024}.
Compared with these methods, the present Biot--Savart law offers several advantages.
One significant benefit is its wide applicability; this method is suitable for arbitrary configurations of dislocations, whereas the Homotopy operator and semi-inverse method require high spatial symmetry and cannot be applied to arbitrary dislocation configurations.
In addition, the variational method utilised in previous studies is not suitable for mathematical analysis because it provides only numerical solutions.
As discussed in subsequent sections, the derivation of analytical solutions for the intermediate configuration presents a significant intersection with other scientific disciplines, particularly electromagnetism and complex function theory, which were previously thought to be unrelated to dislocation theory.
This interdisciplinary connection is anticipated as crucial in the fundamental development of the dislocation theory.

\subsection{Plastic deformation fields around a straight screw dislocation}

The theory formulated in the previous section provides a rigorous mathematical representation of the intermediate configuration $\mathcal{B}$ corresponding to an arbitrary dislocation distribution.
Here, we demonstrate the applicability of the Biot--Savart law to the straight screw dislocation, culminating in the derivation of the analytical expression for the plastic deformation gradient $\bv{F}_p$ associated with the screw dislocation.

A schematic of the analytical model of the straight screw dislocation is shown in Figure \ref{fig.1}(a).
The screw dislocation is located along the $z$-axis.
Hence, the Burgers vector is defined as $\bv{b} = (0, 0, b)$, and the tangent of the dislocation line $\bv{n} = (0, 0, 1)$ is localised along the $z$-axis.
By leveraging the Biot--Savart law, we analytically determined the plastic deformation gradient $\bv{F}_p$, as summarised in the ensuing theorem:

\begin{theorem}[Plastic deformation fields of screw dislocation]\label{th:ScrewDislocation}
Let $\bv{b} = (0, 0, b)$ be the Burgers vector of the screw dislocation whose dislocation line is localised along the $z$-axis, \textit{i.e.}, $\bv{n} = (0, 0, 1)$ (see figure \ref{fig.1}(a)).
Then, the analytical expression for the plastic deformation gradient $\bv{F}_p^{S}$ is given by the following form:
\begin{align}
  \label{eq:Fp_screw}
  \bv{F}_p^S= \bv{I}+\frac{b}{2\pi}\left(
  \begin{array}{ccc}
  0&0&0\\
  0&0&0\\
  -\dfrac{y}{x^2+y^2}&\dfrac{x}{x^2+y^2}&0\\
  \end{array}\right).
\end{align}
\end{theorem}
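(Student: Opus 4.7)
The plan is to apply Theorem \ref{th:BS} directly to the dislocation density associated with an isolated straight screw dislocation and evaluate the resulting integral in closed form.

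First, I would translate the geometric data into a distributional dislocation density. Because the dislocation line is localised on the $z$-axis with tangent $\bv{n}=(0,0,1)$ and Burgers vector $\bv{b}=(0,0,b)$, the only non-vanishing component of $\bv{\alpha}=\bv{b}\otimes\bv{n}$ is the third, and as a vector-valued distribution it reads
\begin{equation}
\bv{\alpha}^1=\bv{\alpha}^2=\bv{0},\qquad
\bv{\alpha}^3(\bv{\xi})=b\,\delta(\xi)\delta(\eta)\,(0,0,1).
\end{equation}
Consequently, Theorem \ref{th:BS} immediately yields $\bv{\Theta}^1=\bv{\Theta}^2=\bv{0}$, so that only the third row of $\bv{\Theta}$ can be non-trivial, which already matches the block structure of equation (\ref{eq:Fp_screw}).

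Next, I would substitute $\bv{\alpha}^3$ into the Biot--Savart integral (\ref{eq:bs cartan}) and collapse the delta functions:
\begin{equation}
\bv{\Theta}^3(\bv{x})=\frac{b}{4\pi}\int_{-\infty}^{\infty}\frac{(0,0,1)\times(x,y,z-\zeta)}{\bigl(x^2+y^2+(z-\zeta)^2\bigr)^{3/2}}\,d\zeta
=\frac{b}{4\pi}(-y,x,0)\int_{-\infty}^{\infty}\frac{d\zeta}{\bigl(x^2+y^2+(z-\zeta)^2\bigr)^{3/2}}.
\end{equation}
The cross product is $z$-independent, so the remaining one-dimensional integral is elementary: with $r^2=x^2+y^2$ and the substitution $u=\zeta-z$, it evaluates to $2/r^2$. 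This produces
\begin{equation}
\bv{\Theta}^3(\bv{x})=\frac{b}{2\pi}\left(-\frac{y}{x^2+y^2},\ \frac{x}{x^2+y^2},\ 0\right),
\end{equation}
which is exactly the third row of the matrix in (\ref{eq:Fp_screw}). Adding the identity per the Helmholtz decomposition (\ref{eq:Helmholtz}) closes the proof.

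The only non-routine point is the improper line integral, which is a standard by-product of Biot--Savart for an infinite straight source; apart from that, the argument is pure book-keeping of the vector/tensor indices. I would also briefly note that the divergence-free condition in (\ref{eq:Cartan}) is satisfied automatically since $\bv{\Theta}^3$ is the two-dimensional vortex field $\nabla\theta/(2\pi)$ times $b$, and that the singularity on the $z$-axis is consistent with the delta-supported torsion.
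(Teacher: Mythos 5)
Your proposal is correct and follows the same route as the paper's own proof: write the dislocation density as $\bv{\alpha}^3=b\,\delta(x)\delta(y)\,dz$, feed it into the Biot--Savart formula (\ref{eq:bs cartan}), and add the identity via the Helmholtz decomposition (\ref{eq:Helmholtz}). You simply make explicit the elementary line integral $\int_{-\infty}^{\infty}(r^2+u^2)^{-3/2}du=2/r^2$ and the divergence-free check, both of which the paper leaves as "a direct calculation."
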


\begin{proof}
  As shown in figure \ref{fig.1}(a), the screw dislocation possess a dislocation line $\bv{n}=(0,0,1)$ with the Burgers vector $\bv{b}=(0,0,b)$ localised along the dislocation line $(x,y)=(0,0)$.
  By definition, the spatial distribution of the dislocation density becomes $\bv{\alpha}^1=\bv{\alpha}^2=\bv{0}$ and
  \begin{align}
    \label{eq:ScrewDislocationDensity}
    \bv{\alpha}^3=\alpha^3_1dx+\alpha^3_2 dy+\alpha^3_3 dz
    =b\delta(x)\delta(y)dz,
  \end{align}
  where $\delta$ is the Dirac delta.
  By incorporating the dislocation density $\bv{\alpha}^i$ into the Biot--Savart law of dislocations (\ref{eq:bs cartan}), we obtain the analytical expression for the plastic displacement gradient $\bv{\Theta}^1=\bv{\Theta}^2=\bv{0}$ and 
  \begin{align}\label{eq:PlasticGradientScrew}
    \bv{\Theta}^3(\bv{x})
    = \frac{b}{2\pi} \left( -\frac{y}{x^2+y^2}, \frac{x}{x^2+y^2}, 0\right).
  \end{align}
  From equation (\ref{eq:Helmholtz}), we obtain $\bv{F}_p^S=\bv{I} + \bv{\Theta}^S$ where we set $\bv{\Theta}^S=(\bv{\Theta}^1, \bv{\Theta}^2, \bv{\Theta}^3)^T$.
  A direct calculation validates that the plastic displacement gradient of the screw dislocation $\bv{\Theta}$ satisfies both Cartan's first structure equation (\ref{eq:Cartan})$_1$ and the divergence-free condition (\ref{eq:Cartan})$_2$.
  This proves the Theorem \ref{th:ScrewDislocation}.
\end{proof}

A fundamental concept in understanding Amp\`ere's law is the right-hand screw rule, which explains the formation of a vortex-like magnetic flux density when a linear steady-state current aligns with the direction of the right-hand screw \cite{griffiths_introduction_2023}.
As shown in the previous section, the analogy between the steady-state electromagnetic field theory described by Amp\`ere's and Gauss' laws and the plastic displacement gradient of dislocations given by Cartan's first structure equation is well-established.
The plastic displacement gradient component $\bv{\Theta}^3$ created by the screw dislocation exhibits a similar vortex-like structure, as shown in the lower section of figure \ref{fig.1} (a).
A counterclockwise rotation centred at the dislocation line was observed on a cross section perpendicular to the dislocation line.
The decrease in the vector magnitude was inversely proportional to the square of the distance from the origin.
Additionally, the dual bases $\bv{\vartheta}^3$ are indicated as vector fields in the middle section in figure \ref{fig.1}(b).
The direction of the vectors aligned with the $z$-axis, distinguishing them from the dual coordinate basis $dz$ when the counterclockwise vortex centred at the axis was superposed.
Consequently, the integration curve initiated from the section formed a part of a helix.
This helical curve represents the outline of the intermediate configuration that cannot be fully represented in the Euclidean space.
The plastic displacement gradient $\bv{\Theta}$ emerged as a crucial factor in the intermediate configuration, as evident from Cartan's equation and the Helmholtz decomposition.

\begin{figure}[!h]
  \centering
    \includegraphics[width=0.95\textwidth]{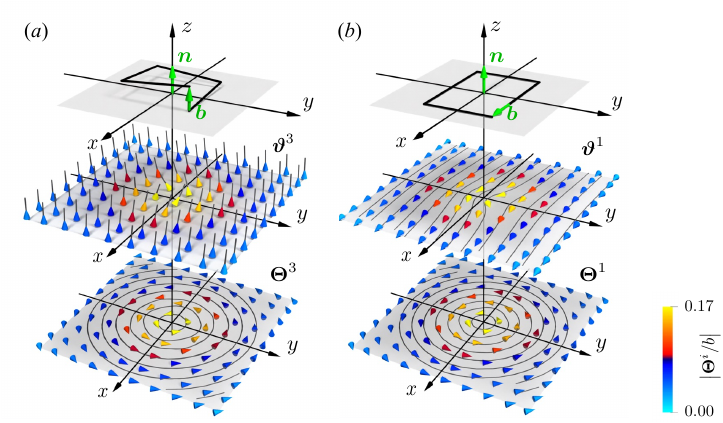}
  \caption{
    Schematic of (a) screw and (b) edge dislocation, and its plastic deformation.
    In both (a) and (b), the top, middle, and bottom sections indicate the dislocation model, dual basis $\bv{\vartheta}^i$, and plastic displacement gradient $\bv{\Theta}^i$, respectively.
    The bottom section highlights that $\bv{\Theta}^3$ and $\bv{\Theta}^1$ of the screw and edge dislocations exhibit the same counterclockwise vortex vector field, which diverges as the distance from the dislocation line increases.
    The dual bases $\bv{\vartheta}^3$ of the screw and $\bv{\vartheta}^1$ of the edge dislocations differ; $\bv{\vartheta}^3$ points outward from the section, whereas $\bv{\vartheta}^1$ points inward, which is directly indicated by the integral curve of the vector field illustrated as the dark curves on the each planes.
  }
  \label{fig.1}
\end{figure}

\subsection{Plastic deformation fields around a straight edge dislocation}
In the preceding section, we derived the plastic deformation gradient of straight screw dislocations.
The Biot--Savart law is universally applicable regardless of the type of dislocations.
In this section, we will apply it to the straight edge dislocation and derive the corresponding plastic deformation gradient.

The edge dislocation model is shown in figure \ref{fig.1}(b).
The Burgers vector is $\bv{b} = (b, 0, 0)$, and the tangent of the dislocation line $\bv{n} = (0, 0, 1)$ is localised along the $z$-axis.
By utilising the Biot--Savart law, the plastic deformation gradient of the edge dislocation can be analytically determined, which is summarised in the following theorem:

\begin{theorem}[Plastic deformation fields of edge dislocation]\label{th:EdgeDislocation}
Let $\bv{b} = (b, 0, 0)$ be the Burgers vector of the edge dislocation with its dislocation line lines on the $z$-axis, \textit{i.e.}, $\bv{n} = (0, 0, 1)$ (figure \ref{fig.1}(b)).
The analytical expression of the plastic deformation gradient $\bv{F}_p^{E}$ is expressed as follows:
\begin{align}
  \label{eq:Fp_edge}
  \bv{F}_p^E= \bv{I}+\frac{b}{2\pi}\left(
  \begin{array}{ccc}
  -\dfrac{y}{x^2+y^2}&\dfrac{x}{x^2+y^2}&0\\
  0&0&0\\
  0&0&0\\
  \end{array}\right).
\end{align}
\end{theorem}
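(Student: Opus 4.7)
The plan is to carry out the direct analogue of the screw dislocation proof, exploiting the structural parallel between the two dislocation types. The only geometric difference between the edge and screw cases is the Burgers vector: both have the same dislocation line $\bv{n}=(0,0,1)$ localised along the $z$-axis, but the Burgers vector rotates from $(0,0,b)$ to $(b,0,0)$. This means the nonzero slot of the dislocation density tensor $\bv{\alpha}=\bv{b}\otimes\bv{n}$ shifts from the $i=3$ component to the $i=1$ component, while the spatial profile of the $1$-form is identical in both cases.

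First, I would write the dislocation density in components. From $\bv{b}=(b,0,0)$ localised at $(x,y)=(0,0)$ with tangent $dz$, we obtain $\bv{\alpha}^2=\bv{\alpha}^3=\bv{0}$ and
\begin{align}
\bv{\alpha}^1 = b\,\delta(x)\delta(y)\,dz,
\end{align}
which is exactly equation (\ref{eq:ScrewDislocationDensity}) relabelled with superscript $1$ instead of $3$. Next, I would apply Theorem \ref{th:BS} (the Biot--Savart law of dislocations) componentwise. Because the integrand in (\ref{eq:bs cartan}) depends only on the spatial profile of $\bv{\alpha}^i(\bv{\xi})$ and not on the index $i$, the same computation that produced (\ref{eq:PlasticGradientScrew}) yields
\begin{align}
\bv{\Theta}^1(\bv{x}) = \frac{b}{2\pi}\left(-\frac{y}{x^2+y^2},\ \frac{x}{x^2+y^2},\ 0\right),
\quad
\bv{\Theta}^2 = \bv{\Theta}^3 = \bv{0}.
\end{align}
Assembling $\bv{F}_p^E=\bv{I}+\bv{\Theta}^E$ with $\bv{\Theta}^E=(\bv{\Theta}^1,\bv{\Theta}^2,\bv{\Theta}^3)^T$ then gives exactly the matrix displayed in (\ref{eq:Fp_edge}).

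Finally, for completeness and as a consistency check, I would verify directly that $\bv{\Theta}^1$ satisfies Cartan's equation (\ref{eq:Cartan})$_1$ and the divergence-free condition (\ref{eq:Cartan})$_2$: its curl reproduces the delta-supported density $\bv{\alpha}^1$ (this is the standard two-dimensional vortex identity $\nabla\times(-y,x,0)/(x^2+y^2) = 2\pi\delta(x)\delta(y)\hat{\bv{z}}$ in the distributional sense), and its divergence vanishes identically away from the origin by inspection. There is no essential obstacle here; the only place where one must be slightly careful is the distributional interpretation of the singular integrand at the dislocation line, but this is already handled in the proof of Theorem \ref{th:ScrewDislocation} and transfers verbatim. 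The proof is therefore a direct application of Theorem \ref{th:BS} combined with the observation that the roles of components $1$ and $3$ are interchanged relative to the screw case.
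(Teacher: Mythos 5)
Your proposal is correct and follows essentially the same route as the paper's own proof: write the dislocation density $\bv{\alpha}^1=b\,\delta(x)\delta(y)\,dz$ with $\bv{\alpha}^2=\bv{\alpha}^3=\bv{0}$, apply the Biot--Savart law of Theorem \ref{th:BS} to obtain $\bv{\Theta}^1$, assemble $\bv{F}_p^E=\bv{I}+\bv{\Theta}^E$ via equation (\ref{eq:Helmholtz}), and verify equations (\ref{eq:Cartan})$_1$ and (\ref{eq:Cartan})$_2$ by direct calculation. Your added remarks on the index interchange with the screw case and the distributional vortex identity are consistent with, and slightly more explicit than, the paper's argument.
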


\begin{proof}
  As shown in figure \ref{fig.1}(b), the edge dislocation has the Burgers vector $\bv{b}=(b,0,0)$ and the dislocation line $\bv{n}=(0,0,1)$ localised along the $z$-axis.
  The spatial distribution of the dislocation density becomes $\bv{\alpha}^2=\bv{\alpha}^3=\bv{0}$ and
  \begin{align}
      \bv{\alpha}^1=\alpha^1_1dx+\alpha^1_2 dy+\alpha^1_3 dz
      =b\delta(x)\delta(y)dz.
  \end{align}
  By substituting the dislocation density $\bv{\alpha}^i$ into the Biot--Savart law of dislocation (\ref{eq:bs cartan}), the analytical expression of the plastic displacement gradient becomes $\bv{\Theta}^2=\bv{\Theta}^3=\bv{0}$ and
  \begin{align}\label{eq:PlasticGradientEdge}
      \bv{\Theta}^1 = \frac{b}{2\pi} \left( -\frac{y}{x^2+y^2}, \frac{x}{x^2+y^2}, 0\right).
  \end{align}
  From equation (\ref{eq:Helmholtz}), we obtain $\bv{F}_p^E=\bv{I} + \bv{\Theta}^E$, where we set $\bv{\Theta}^E=(\bv{\Theta}^1, \bv{\Theta}^2, \bv{\Theta}^3)^T$.
  A direct calculation validated that the plastic deformation gradient of the edge dislocation $\bv{F}_p^E$ satisfied Cartan's first structure equation (\ref{eq:Cartan})$_1$ and divergence-free condition (\ref{eq:Cartan})$_2$, simultaneously, validating the theorem \ref{th:EdgeDislocation}.
\end{proof}

The plastic deformation of edge dislocations is shown in the bottom section of figure \ref{fig.1}(b).
In this analysis, the dislocation lines are localised at the origin of the $xy$-plane, resulting in a singularity in the plastic displacement field.
By excluding this singular point from the analysis, the plastic displacement gradient $\bv{\Theta}^1$ formed a counterclockwise vortex on a cross section perpendicular to the dislocation line, with its magnitude decreasing inversely proportional to the distance and approaching zero at infinity.
A key finding of this study is that this vortex structure is consistent across all types of dislocations.
The vector field plot of the dual basis $\bv{\vartheta}^1$ is shown in the middle of figure \ref{fig.1}(b).
Unlike the screw dislocation, the vector field was distributed within the section, and the vortex field was not as prominent.
Instead, the magnitude of the vector field was less than 1 for $y > 0$ and greater than 1 for $y < 0$ along the $x$-axis, indicating stretching and shrinking of the line element compared with that of the reference configuration.
This trend contrasts with that of a screw dislocation, which always results in stretching.
The stretching and shrinking of the dual frame likely contributed significantly to the formation of the in-plane stress field of the edge dislocation.

While the plastic deformation gradient of screw and edge dislocations differs, their plastic displacement gradients align perfectly, as demonstrated in Theorems \ref{th:ScrewDislocation} and \ref{th:EdgeDislocation}.
As discussed in the following section, the elastic mechanical fields of screw and edge dislocations differ considerably.
The distinction between them lies in the component to which the identity matrix is added, although the identity matrix itself does not provide information on the plastic deformation of the dislocation.
Therefore, the essence of plastic deformation owing to dislocations lies in the plastic displacement gradient, which is independent of the type of dislocation.
However, the plastic displacement gradients of screw and edge dislocations align perfectly owing to essentially identical dislocation density in constructing the Riemann--Cartan manifold for these two lattice defects.
The Biot--Savart law can be utilised to analytically determine the plastic deformation gradient for a straight dislocation arrangement.

\subsection{Relevance to the classical dislocation theory}

Here, we now focus on the theoretical analysis of plastic deformation.
Notably, the geometrical and conventional theories of dislocation differ significantly.
In electromagnetism, the magnetic flux density formed around a steady-state electric current is derived using the Biot--Savart law, which relies on two fundamental equations: Amp\`ere's and Gauss' law.
Similarly, in the geometrical dislocation theory, the plastic displacement gradient field formed around the dislocation density is described by two governing equations, Cartan's equation and the divergence-free condition. However, the mathematical similarity of these equations has led to the discovery of a previously unknown Biot--Savart law of plastic deformation.
Based on the conventional dislocation theory, the relationship between dislocation density and plastic displacement gradient, which corresponds to Cartan's structural equation, has already been established \cite{kroner_allgemeine_1959,dewit_theory_1973,dewit_theory_1973-1,de_wit_view_1981,lazar_dislocations_2006}.
However, the relationship corresponding to the divergence-free condition was explored only in a few studies \cite{wenzelburger_kinematic_1998,acharya_model_2001,roy_finite_2005,acharya_structure_2019}, and therefore, the analogy with electromagnetics was not fully be considered.
As previously mentioned, the divergence-free condition resulted from the Helmholtz decomposition of the plastic deformation gradient, which mathematically originated from the fact that $\bv{\vartheta}$ assumes the $\R{3}$-valued 1-form.
Moreover, recent studies has demonstrated the duality between conventional elasticity theory of defects in two-dimensional crystals and fractional excitations in quantum systems \cite{pretko_fracton-elasticity_2018,pretko_symmetry-enriched_2018,pretko_crystal--fracton_2019,gromov_colloquium_2024}.
This duality is corroborated by the tensor gauge theory, which results in the governing equations akin to the that of the electromagnetics.
This aspect also suggests the validity of the analogy between magnetostatics and the geometrical theory of dislocations.

\section{Discussion}

\subsection{Cauchy--Riemann equation and plastic deformation potential}

The discussion thus far indicates that the plastic deformation field surrounding the dislocations mirrors the distribution of a static magnetic flux density field around an electric current.
This mathematical alignment is underpinned by the orthogonality of $\bv{\Theta}^i$ and $\bv{\alpha}^i$.
As a stationary current generates a static magnetic field, the dislocation density tensor induces plastic deformation in the direction orthogonal to the dislocation line.
Apart from this orthogonality, another layer of orthogonality in the plastic deformation fields is introduced here, that is the conformal property.
This additional orthogonality of the plastic deformation fields reveals another remarkable mathematical characteristic of the geometrical theory of dislocations.
This aspect pertains to its connection with the theory of functions of a complex variable.
Particularly, the conformal property of the plastic deformation gradient serves as the foundation for deriving a complex functional representation of the plastic deformation fields.
Hereafter, we denote a complex number as $w=x+\mathrm{i} y$.
In standard complex analysis, we examine two smooth functions, $u=u(w)$ and $v=v(w)$ defined on the complex plane $\mathbb{C}$ (or $xy$-plane).
The complex function $f(w)=u(w)+\mathrm{i} v(w)$ is conformal on a domain in the complex plane if and only if its complex derivative exist on it \cite{arfken_mathematical_2013}.
Notably, complex differentiability is equivalent to the Cauchy--Riemann equation.
By applying this framework, an analytical representation using complex functions can be obtained for the plastic deformation field of a straight dislocation as discussed in the previous section.
We reiterated that the plastic deformation field of a dislocation represents a conformal map.

\begin{theorem}[Conformal property of plastic deformation fields of dislocations]
  \label{thm:conformity}
  Let $\bv{b} = (0, 0, b)$ be Burgers vector of a screw dislocation whose dislocation line lies along the $z$-axis, \textit{i.e.}, $\bv{n} = (0, 0, 1)$ (see figure \ref{fig.1}(a)), and let $\bv{\alpha}^3$ be the corresponding dislocation density tensor.
  The plastic displacement gradient $\bv{\Theta}^3$ on the $xy$-plane, excluding the origin $(x,y)=(0,0)$, represents a conformal map.
  A similar concept applies to the plastic displacement fields of edge dislocations.
\end{theorem}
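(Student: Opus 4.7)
The plan is to exhibit $\bv{\Theta}^3$, when viewed as the planar map $(x,y)\mapsto(\Theta^3_1,\Theta^3_2)$, as a complex function of $w=x+\mathrm{i} y$ that satisfies the Cauchy--Riemann equations (up to a conjugation), from which conformality follows by the standard equivalence between complex differentiability and angle preservation.

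First, I would use the explicit formula from Theorem \ref{th:ScrewDislocation}: on $\mathbb{R}^2\setminus\{(0,0)\}$ one has $\Theta^3_1=-\tfrac{b}{2\pi}\tfrac{y}{x^2+y^2}$, $\Theta^3_2=\tfrac{b}{2\pi}\tfrac{x}{x^2+y^2}$, and $\Theta^3_3\equiv0$, so the image lies in the $xy$-plane and the map is smooth off the origin. Next, I would assemble the complex combination $f(w):=\Theta^3_1-\mathrm{i}\,\Theta^3_2$ and collapse it using the elementary identities $-y-\mathrm{i} x=-\mathrm{i}\,\overline{w}$ and $x^2+y^2=w\overline{w}$, which give $f(w)=-\mathrm{i} b/(2\pi w)$. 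This is manifestly holomorphic on $\mathbb{C}\setminus\{0\}$ with non-vanishing derivative, so $(\Theta^3_1,-\Theta^3_2)$ obeys the Cauchy--Riemann equations; equivalently, the Jacobian of $(x,y)\mapsto(\Theta^3_1,\Theta^3_2)$ is a positive scalar multiple of a reflection at every point of the punctured plane, which is precisely the defining property of a conformal map. A natural by-product is the holomorphic primitive $F(w)=-\tfrac{\mathrm{i} b}{2\pi}\log w$ on any simply connected domain avoiding the origin, tying the argument back to the ``complex potential'' language of the paper.

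For the edge dislocation, Theorem \ref{th:EdgeDislocation} produces the same rational expression for $\bv{\Theta}^1$ (only the row index of the matrix changes), so the identical two-line complex-variable computation handles that case as well. The only step that requires any care is the choice of complex structure: the direct identification $\Theta^3_1+\mathrm{i}\,\Theta^3_2$ turns out to be \emph{anti}-holomorphic in $w$ rather than holomorphic, so one must either take the conjugate (as above) to obtain a genuinely holomorphic representative or acknowledge that the conformal map in question is orientation-reversing. Beyond this sign bookkeeping, no real obstacle arises; the result is essentially the observation that $1/\overline{w}$ is, up to a constant multiple, the vortex vector field around the origin.
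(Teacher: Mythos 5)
Your proposal is correct, but it takes a genuinely different route from the paper. You verify conformality \emph{a posteriori}: you take the explicit closed-form solution from Theorem \ref{th:ScrewDislocation}, collapse $f(w)=\Theta^3_1-\mathrm{i}\Theta^3_2$ into $-\mathrm{i}b/(2\pi w)$, and read off holomorphy (and the potential $-\tfrac{\mathrm{i}b}{2\pi}\log w$) by inspection. The paper instead argues \emph{a priori} from the governing equations: since $\bv{\alpha}^3=b\,\delta(x)\delta(y)\,dz$ is supported only on the dislocation line, Cartan's equation $\nabla\times\bv{\Theta}^3=\bv{\alpha}^3$ and the divergence-free condition $\nabla\cdot\bv{\Theta}^3=0$ reduce, away from the origin, to
\begin{align*}
\pd{\Theta^3_1}{y}-\pd{\Theta^3_2}{x}=0,\qquad \pd{\Theta^3_1}{x}+\pd{\Theta^3_2}{y}=0,
\end{align*}
which are exactly the Cauchy--Riemann equations for $f=\Theta^3_1-\mathrm{i}\Theta^3_2$, i.e.\ $\partial f/\partial\bar w=0$, with no reference to the particular solution. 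The paper's argument is therefore structural: it shows that \emph{any} field satisfying the curl-free and divergence-free system on a dislocation-free region is conformal, which is the conceptual point the authors want (the analogy curl/div $\leftrightarrow$ Cauchy--Riemann), whereas your computation establishes the property only for the specific single straight dislocation. On the other hand, your version buys two things the paper's proof glosses over: you check that $f'(w)=\mathrm{i}b/(2\pi w^2)$ is non-vanishing (strictly needed for angle preservation, not just complex differentiability), and you correctly flag that the map $(x,y)\mapsto(\Theta^3_1,\Theta^3_2)$ itself is anti-holomorphic, so the conformal map is orientation-reversing unless one passes to the conjugate $(\Theta^3_1,-\Theta^3_2)$ --- a sign subtlety the paper handles only implicitly through its choice of $f$.
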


\begin{proof}
  According to equation (\ref{eq:ScrewDislocationDensity}), the dislocation density of the screw dislocation is expressed as $\bv{\alpha}^3=b \delta(x)\delta (y)dz$.
  In contrast, from equation (\ref{eq:Fp_screw}), the plastic displacement fields $\bv{\Theta}^3$ are distributed on any $xy$-plane, such that $\bv{\Theta}^3(x,y)=(\Theta^3_1(x,y), \Theta^3_2(x,y))$, whereas its distribution is uniform along the $z$-direction.
  The dislocation density and plastic displacement gradient satisfy Cartan's first structure equation (\ref{eq:Cartan}): $\nabla\times \bv{\Theta}^3=b \delta(x)\delta (y)$.
  Obviously, the dislocation density $\bv{\alpha}^3$ is localised along the dislocation line and therefore only exists at the origin $(0,0)$.
  Hence, Cartan's equation and the divergence-free condition hold true almost everywhere except at the origin $(0,0)$ such that
  \begin{align}
    \label{eq:cauchy-riemann}
    \pd{\Theta^3_1}{y}-\pd{\Theta^3_2}{x}=0,\quad
    \pd{\Theta^3_1}{x}+\pd{\Theta^3_2}{y}=0.
  \end{align}
  We introduce the complex function defined as
  \begin{align}\label{eq:ComplexFunction}
      f(x+\mathrm{i} y)=\Theta^3_1(x,y)-\mathrm{i} \Theta_2^3(x,y).
  \end{align}
  In other words, $\Theta^3_1$ and $-\Theta^3_2$ represent the real and imaginary parts of the complex function $f$.
  Therefore, equation (\ref{eq:cauchy-riemann}) is essentially the Cauchy--Riemann equation for the complex function $f$.
  These equations are also equivalent to the differential of the function $f$ with respect to the complex conjugate $\bar{w}=x-\mathrm{i}y$ such that $\partial f/\partial \bar{w}=(\partial f/ \partial x+\mathrm{i}\partial f/\partial y)/2=0$.
  This result indicates that the function $f$ maintains the orthogonality of the original $(x,y)$-coordinates following the mapping onto the complex plane.
  In other words, the plastic deformation field $(\Theta_1^3, \Theta_2^3)$ defines an orthogonal coordinate system on the complex plane, excluding the origin $(0,0)$.
\end{proof}

The plastic displacement gradient $\bv{\Theta}^i$ resulting from a dislocation exhibits two types of orthogonality.
First, they are distributed in a plane orthogonal to the dislocation line.
This orthogonality allows the rows of the plastic displacement gradient to be decoupled and independently determined for each coordinate component of the Burgers vector in Cartan's equation (\ref{eq:Cartan}).
Consequently, the components of the Burgers vector along the three coordinate axes possess independent dislocation densities, as shown in equation (\ref{eq:DislocationDensity-Torsion}).
Hence, they do not interact with each other and can be superimposed.
The second type of orthogonality pertains to the in-plane $\Theta^i$-components.
This implies that the plastic displacement gradient functions act as a conformal map, resulting in a circular magnetic field similar to that described by Amp\`ere's law.

This leads to the following outcomes, which are crucial in the analysis of the plastic deformation of dislocations using complex functions.

\begin{theorem}[Plastic deformation potential]
\label{thm:PlasticPotential}
Let $\Psi(w)$ represent a complex function in the form
\begin{align}
\label{eq:PlasticPotential}
\Psi(w) = -\mathrm{i} \frac{b}{2\pi} \log w.
\end{align}
Then, the plastic displacement gradients of the screw dislocations are obtained from the complex derivative of $\Psi$ as follows:
\begin{align}
\Theta^3_1 = \mathrm{Re}\left( \frac{d \Psi}{d w} \right),
\quad
\Theta^3_2 = -\mathrm{Im}\left( \frac{d \Psi}{d w} \right).
\end{align}
The same holds true for edge dislocations: $\Theta^1_1 = \mathrm{Re} \left( d\Psi/dw \right)$ and $\Theta^1_2 = -\mathrm{Im} \left( d\Psi/dw \right)$.
The function $\Psi(w)$ is referred to as the complex potential of plastic deformation.
\end{theorem}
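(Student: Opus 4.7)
The plan is to verify the theorem by direct computation, exploiting the conformality established in Theorem \ref{thm:conformity}. That result showed $f(w)=\Theta^3_1(x,y)-\mathrm{i}\Theta^3_2(x,y)$ to be holomorphic on $\mathbb{C}\setminus\{0\}$, because $(\Theta^3_1,-\Theta^3_2)$ satisfies the Cauchy--Riemann equations there. It is therefore natural to look for an explicit holomorphic primitive $\Psi(w)$ on the punctured plane satisfying $d\Psi/dw=f$, and the proposed candidate $\Psi(w)=-\mathrm{i}(b/2\pi)\log w$ is precisely what one obtains by formally antidifferentiating the expected right-hand side. The proof then reduces to checking this identity pointwise.

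Concretely, I would compute $d\Psi/dw=-\mathrm{i}(b/2\pi)/w$ using $d(\log w)/dw=1/w$, multiply numerator and denominator by $\bar{w}$ to rewrite it as
\begin{equation*}
\frac{d\Psi}{dw}=\frac{b}{2\pi}\cdot\frac{-y-\mathrm{i}x}{x^2+y^2},
\end{equation*}
and then read off the real and negated imaginary parts. Comparison against the explicit formula (\ref{eq:PlasticGradientScrew}) for $\bv{\Theta}^3$ is immediate: $\mathrm{Re}(d\Psi/dw)=-(b/2\pi)y/(x^2+y^2)=\Theta^3_1$ and $-\mathrm{Im}(d\Psi/dw)=(b/2\pi)x/(x^2+y^2)=\Theta^3_2$, finishing the screw-dislocation case. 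The edge-dislocation statement follows with no additional work, because equations (\ref{eq:PlasticGradientScrew}) and (\ref{eq:PlasticGradientEdge}) show that the in-plane pair $(\Theta^1_1,\Theta^1_2)$ for the edge dislocation and $(\Theta^3_1,\Theta^3_2)$ for the screw dislocation coincide as functions on the $xy$-plane, so the same $\Psi$ represents both.

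Realistically there is no genuine obstacle here: given Theorem \ref{thm:conformity}, this proposition is essentially a verification, and the work one might naively expect --- showing that the Cauchy--Riemann structure guarantees the existence of a potential --- is already encoded in the preceding theorem. The one point worth a sentence of care is the multi-valuedness of $\log w$ on $\mathbb{C}\setminus\{0\}$: different branches of $\Psi$ differ by an additive constant $2\pi\mathrm{i}k$ which vanishes upon differentiation, so $d\Psi/dw$ is single-valued and the identification with the plastic displacement gradient is unambiguous, even though $\Psi$ itself is defined only up to a choice of branch.
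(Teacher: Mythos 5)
Your proposal is correct and follows essentially the same route as the paper: a direct computation of $d\Psi/dw=-\mathrm{i}(b/2\pi)/w=(b/2\pi)(-y-\mathrm{i}x)/(x^2+y^2)$, identification of the real and negated imaginary parts with $\Theta^3_1$ and $\Theta^3_2$, and the observation that the edge case follows because $\bv{\Theta}^1$ of the edge dislocation coincides with $\bv{\Theta}^3$ of the screw dislocation (the paper merely routes the same calculation through the Wirtinger operator $\tfrac12(\partial_x-\mathrm{i}\partial_y)$ instead of differentiating $\log w$ directly). Your added remark on the branch ambiguity of $\log w$ cancelling under differentiation is a sensible clarification the paper omits, and your sign for the imaginary part of $d\Psi/dw$ is in fact the correct one.
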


\begin{proof}
  The differential operator with respect to the complex variable $w$ is defined as $d/dw=(\partial/\partial x-\mathrm{i}\partial/\partial y)/2$.
  The partial derivatives of the potential are denoted as $\partial \Psi/\partial x = \mathrm{i}b/2\pi w$ and $\partial \Psi/\partial y = -b/2\pi w$.
  Hence, from a direct calculation, we obtain
  \begin{align}
  \frac{d\Psi}{dw}
  =\frac{1}{2}\left(
  \frac{\partial }{\partial x}-\mathrm{i}\frac{\partial}{\partial y}
  \right)\Psi
  =-\frac{b}{2\pi} \frac{y-\mathrm{i}x}{x^2+y^2}.
  \end{align}
  Clearly, $\mathrm{Re}(d\Psi/dw)$ and $-\mathrm{Im}(d\Psi/dw)$ correspond to $\Theta^3_1$ and $\Theta^3_2$ of the screw dislocation, respectively.
  The plastic displacement gradient $\bv{\Theta}^3$ of the screw dislocation aligns with $\bv{\Theta}^1$ of the edge dislocation, indicating that the function (\ref{eq:PlasticPotential}) represents the plastic potential of the edge dislocations.
\end{proof}

The properties of the complex potential of plastic deformation (\ref{eq:PlasticPotential}) can be better understood by changing the representation from Cartesian $(x,y)$-coordinates in the complex plane to polar $(r,\theta)$-coordinates.
With the polar representation $w=r e^{\mathrm{i}\theta}$, the complex potential given in equation (\ref{eq:PlasticPotential}) is written in the polar coordinate system in such a way that
\begin{align}
    \Psi(r, \theta)=-\frac{b}{2\pi}(\theta+\mathrm{i}\log r).
\end{align}
The real and imaginary parts of the complex function in the polar coordinate system are shown in figures \ref{fig.2}.
The real part $\mathrm{Re}(\Psi)=-\frac{b}{2\pi}\theta$ forms a helical surface with coefficient $-b/2\pi$.
This result indicates that the potential is multivalued, with a singularity present at the axial origin, \textit{i.e.}, at the dislocation centre.
Mathematically, the dislocation centre $z=0$ is the branch point.
However, the imaginary part $\mathrm{Im}(\Psi)=-\frac{b}{2\pi}\log r$ exhibits a logarithmic decay as the distance from the origin increases.
The analysis of the displacement field generated by dislocations is considered as topological lattice defects involving topological singularities \cite{acharya_structure_2019}.
While this understanding is accurate, this study has successfully extract the plastic deformation field from the total deformation field of the lattice defect.
This extraction validated that the topological change resulted from plastic deformation exclusively, rather than being directly linked to elastic deformation.

\begin{figure}[!h]
  \centering
  \includegraphics[width=0.95\textwidth]{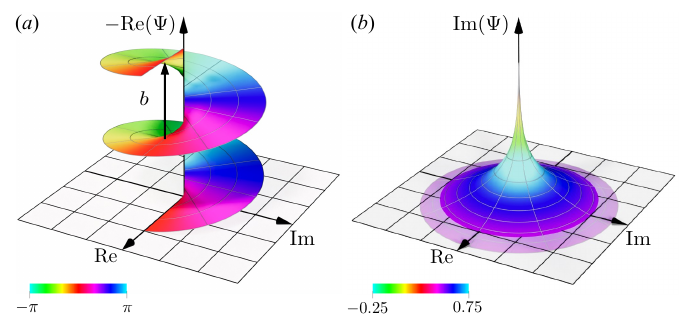}
  \caption{
    Visualisation results of the plastic deformation potential owing to dislocations. (a) Real part of the potential shows a multi-valued helical surface, representing the topological property of dislocation. (b) Imaginary part of the potential shows logarithmic decay.
  }
  \label{fig.2}
\end{figure}

\subsection{Linearised stress fields around dislocations}
The geometrical theory of dislocations is characterised by its unique approach to the treatment of the intermediate configuration.
Specifically, plastic deformation is not represented within the usual Euclidean space but is instead modelled using mathematically generalised Riemann--Cartan manifolds.
The current configuration $\mathcal{C}$ is achieved by embedding the intermediate configuration into Euclidean space.
The embedding map represents elastic deformation, which is determined by solving the stress equilibrium equation (\ref{eq:stressequil}).
Owing to the non-linear nature of this partial differential equation, obtaining an analytical solution is typically impractical.
However, a linear approximation enables to determine the stress field around a dislocation analytically.
Our proposed theory can be objectively validated by comparing the stress fields obtained through the geometrical approach with those utilised in Volterra dislocation theory.

First, we consider the linearisation of the kinematics.
The linearised Cauchy strains for total deformation $\bv{\mathcal{E}}_t$, plastic deformation $\bv{\mathcal{E}}_p$ and elastic deformation $\bv{\mathcal{E}}_e$ are expressed as follows:
\begin{align}\label{eq:LinearStrains}
  \bv{\mathcal{E}}_t={}\frac{1}{2}(\bv{\nabla u}+\bv{\nabla u}^T),
  \quad
  \bv{\mathcal{E}}_p={}\frac12(\bv{\Theta}+\bv{\Theta}^T),
  \quad
  \bv{\mathcal{E}}_e=\bv{\mathcal{E}}_t-\bv{\mathcal{E}}_p.
\end{align}
The elastic stress $\bv{\sigma}$ is determined using the linearised Hooke's law, such that $\bv{\sigma}=\bv{C}_0:\bv{\mathcal{E}}_e$ where $\bv{C}_0$ denotes the standard elastic coefficients defined by $C^{ijkl}_0=\frac{2\nu\mu}{1-2\nu} \delta^{ij} \delta^{kl} + \mu(\delta^{ik}\delta^{jl}+\delta^{il}\delta^{jk})$ \cite{marsden_mathematical_1984}.
Based on Hooke's law, the plastic Cauchy strain $\bv{\mathcal{E}}_p$ does not generate elastic stress.
That is, $\bv{\mathcal{E}}_p$ is a stress-free eigen strain \cite{mura_micromechanics_1987}.
Based on the standard theory of linear elasticity, the stress equilibrium equation resulting from the distribution of the plastic strain $\bv{\mathcal{E}}_p$ is formulated as follows:
\begin{align}\label{eq:stressequil2}
    \nabla\cdot \left(\bv{C}_0:\bv{\mathcal{E}}_t\right)
    =\nabla\cdot \left(\bv{C}_0:\bv{\mathcal{E}}_p\right).
\end{align}
This equation represents the linearised stress equilibrium equation, which determines the total Cauchy strain $\bv{\mathcal{E}}_t$ for a specific distribution of plastic strain $\bv{\mathcal{E}}_p$.

Initially, we analyse the stress fields of a screw dislocation.
By substituting the plastic displacement gradient $\bv{\Theta}^S$ of the screw dislocation (\ref{eq:PlasticGradientScrew}) into the definition of plastic strain $\bv{\mathcal{E}}_p$ as outlined in equation (\ref{eq:LinearStrains})$_2$, we obtain the plastic Cauchy strain of the screw dislocation, such that
\begin{align}
  \bv{\mathcal{E}}_p^S = \frac{b}{4\pi}\begin{pmatrix}
    0&0& \dfrac{-y}{x^2+y^2}\\
    0&0& \dfrac{x}{x^2+y^2}\\
    \dfrac{-y}{x^2+y^2}&\dfrac{x}{x^2+y^2}&0
  \end{pmatrix}.
\end{align}
By substituting this plastic strain into the right-hand side of equation (\ref{eq:stressequil2}), we obtain the stress equilibrium equation for screw dislocations.
Notably, the right-hand side of the stress equilibrium equation (\ref{eq:stressequil2}) vanishes identically.
This outcome suggests that the plastic strain $\bv{\mathcal{E}}_p^S$ is in a self-equilibrium state, having no impact on the stress equilibrium equation.
Consequently, the total strain vanishes $\bv{\mathcal{E}}_t^S=\bv{\mathcal{E}}_p^S+\bv{\mathcal{E}}_e^S=\bv{0}$, implying that $\bv{\mathcal{E}}_e^S=-\bv{\mathcal{E}}_p^S$.
However, from the linearised Hooke's law, $\bv{\sigma}^S=\bv{C}_0 : \bv{\mathcal{E}}_e^S=-2\mu \bv{\mathcal{E}}_p^S$.
Consequently, the linearised elastic strain $\bv{\mathcal{E}}_e^S$ and stress field $\bv{\sigma}^S$ of the screw dislocation can be expressed as follows:
\begin{align}
  \bv{\mathcal{E}}_e^S=-\bv{\mathcal{E}}_p^S,
  \quad
  \bv{\sigma}^S={}
  -\frac{\mu b}{2\pi}
  \begin{pmatrix}
    0&0& \dfrac{-y}{x^2+y^2}\\
    0&0& \dfrac{x}{x^2+y^2}\\
    \dfrac{-y}{x^2+y^2}&\dfrac{x}{x^2+y^2}&0
  \end{pmatrix}.
\end{align}
This is the linearised stress field of the screw dislocation obtained from the geometrical theory.
The shear components $(\sigma^S)^{13}=(\sigma^S)^{31}$ and $(\sigma^S)^{23}=(\sigma^S)^{32}$ are non-zero, and are shown in figure \ref{fig.3}(a).
They exhibit singularity along the dislocation line $(x,y)=(0,0)$, diminishing as distance increases.
This outcome aligns quantitatively with the classical Volterra dislocation theory \cite{anderson_theory_2017}.

Similarly, from equation (\ref{eq:PlasticGradientEdge}), the plastic Cauchy strain of the edge dislocation becomes
\begin{align}
  \bv{\mathcal{E}}_p^E = \frac{b}{4\pi}\begin{pmatrix}
    \dfrac{-2y}{x^2+y^2}&\dfrac{x}{x^2+y^2}&0\\
    \dfrac{x}{x^2+y^2}&0&0\\
    0&0&0
  \end{pmatrix}.
\end{align}
By substituting the result into the right-hand side of the stress equilibrium equation (\ref{eq:stressequil2}) and integrating the linear partial differential equation, we obtain the analytical form of the total displacement $\bv{u}$ such that
\begin{align}
  \bv{u} = \frac{b}{2\pi}\begin{pmatrix}
    -\dfrac{x y}{2(1-\nu)(x^2+y^2)},\,
    \dfrac{1}{4(1-\nu)} \bigg\{\dfrac{x^2-y^2}{x^2+y^2}+(1-2\nu) \log(x^2+y^2)\bigg\},\,0
  \end{pmatrix}.
\end{align}
From the total displacement $\bv{u}$ and definition of strains (\ref{eq:LinearStrains}), the elastic Cauchy strain $\bv{\mathcal{E}}_e^E$ and the resulting stress $\bv{\sigma}^E=\bv{C}_0:\bv{\mathcal{E}}_e^E$ can be obtained, such that
\begin{align}
  \bv{\mathcal{E}}_e^E=&{}
  \dfrac{b}{4\pi(1-\nu)}\begin{pmatrix}
    y\dfrac{(3-2\nu) x^2+(1-2\nu) y^2}{(x^2+y^2)^2}&
    \dfrac{x(y^2-x^2)}{(x^2+y^2)^2}&
    0\\
    \dfrac{x (y^2-x^2)}{(x^2+y^2)^2}&
    \dfrac{-(1+2 \nu)x^2 y+(1-2 \nu) y^3}{(x^2+y^2)^2}&
    0\\
    0&0&0
  \end{pmatrix},
  \\
  \bv{\sigma}^E=&{}
    -\frac{\mu b}{2\pi (1-\nu)}\begin{pmatrix}
      -\dfrac{y(3 x^2+y^2)}{(x^2+y^2)^2}&
      \dfrac{x (x^2-y^2)}{(x^2+y^2)^2}&
      0\\
      \dfrac{x (x^2-y^2)}{(x^2+y^2)^2}&
      \dfrac{y (x^2-y^2)}{(x^2+y^2)^2}&
      0\\
      0&0&-\dfrac{2\nu y}{x^2+y^2}
  \end{pmatrix}.
\end{align}
As shown in the above equations, the normal components $(\sigma^E)^{11},\ (\sigma^E)^{22}, (\sigma^E)^{33}$ and the shear component $(\sigma^E)^{12}=(\sigma^E)^{21}$ are non-zero, and plotted in figure \ref{fig.3}(b).
Similar to the stress fields of the screw dislocations, they exhibit singularity along the dislocation line $(x,y)=(0,0)$, diminishing as distance increases.
These observations align perfectly with the strain and stress fields associated with the edge dislocation in the Volterra dislocation theory \cite{anderson_theory_2017}.
Therefore, the elastic stress field of dislocations, derived from principles of differential geometry, closely mirrors the elastic stress predicted by Volterra's theory.
This outcome not only implies that the plastic deformation field of a dislocation shares mathematical similarities with electromagnetism and complex function theory, but also suggests that the analytical methods utilised in these fields can be applied to mechanical analysis of dislocations.

\begin{figure}[!h]
  \centering
  \includegraphics[width=0.95\textwidth]{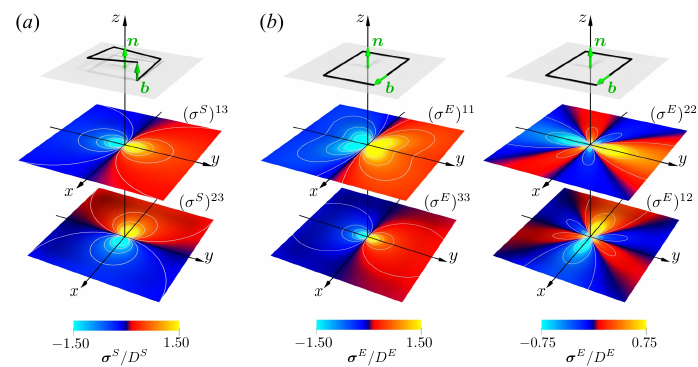}
  \caption{
    Visualisation results of the linear stress fields of screw and edge dislocations.
    (a) shows shear components $(\sigma^S)^{13}$ and $(\sigma^S)^{23}$ of the screw dislocation, while (b) shows normal components $(\sigma^E)^{11}$, $(\sigma^E)^{22}$, $(\sigma^E)^{33}$, and shear component $(\sigma^E)^{12}$ of the edge dislocation, all of which coincident with those of the Volterra dislocation theory.
    The normalisation factor of screw and edge dislocations are $D^S=\mu b/2\pi$ and $D^E=\mu b/2\pi(1-\nu)$, respectively.
  }
  \label{fig.3}
\end{figure}

\subsection{Geometrical origin of elastic deformation}

Finally, we explored the geometrical origin of the elastic deformation formed near the dislocations.
Generally, the connection of a Riemann--Cartan manifold includes torsion and curvature, which are absent in Euclidean geometry.
However, the current configuration $\mathcal{C}$ was achieved by elastically embedding the intermediate configuration $\mathcal{B}$ into the Euclidean space.
The non-Euclidean connection can be viewed as a form of geometrical frustration, hindering the manifold from conforming to Euclidean space.
This suggests that elastic deformation served to alleviate this frustration by eliminating torsion and curvature in the connection.
In our previous study, we demonstrated that the origin of elastic deformation is rooted in geometrical frustration through numerical analysis of the nonlinear stress equilibrium equation \cite{kobayashi_geometrical_2024}.
In this study, we derived analytical expressions for dislocation stress fields using a linear approximation.
While a linear approximation was employed, the foundational role of elastic deformation remained unchanged, and our objective was to validate this assertion within the current framework.

Various geometric quantities can be utilised to characterise the intermediate configuration; however, we focus on the Einstein curvature $\bv{G}$.
The Einstein curvature is defined using the Riemannian metric as follows: $\bv{G} = \bv{R} - \bv{g}_\mathcal{B} R/2$, where $\bv{R} = R^k_{ikj}dx^i \otimes dx^j$ represents the Ricci curvature and $R = R^k_{ikj}g_\mathcal{B}^{ij}$ represents the scalar curvature \cite{nakahara_geometry_2003}.
Both quantities can be derived only using the Riemannian metric $\bv{g}_\mathcal{B}$.
As shown in equation (\ref{eq:IntermediateMetric}), the Riemannian metric for the intermediate configuration can be derived from the plastic deformation gradient $\bv{F}_p$.
Therefore, the Einstein curvature near the dislocation can be obtained from $\bv{F}_p$.
Notably, the Einstein curvature is zero in the Euclidean space, making it a valuable criterion for assessing the degree of geometrical frustration.
This curvature can be expressed by utilising a linear approximation \cite{kroner_allgemeine_1959,teodosiu_elastic_1982,de_wit_view_1981}.
\begin{align}
  \label{eq:einstein curvature}
  \bv{G}=\bv{\eta}+O(b^2),
\end{align}
where $\bv{\eta}$ on the right-hand side denotes the incompatibility tensor in the linear elasticity theory.
This incompatibility tensor can be defined using the dislocation density tensor $\bv{\alpha}$ as $\bv{\eta}=-(\nabla\times \bv{\alpha}+(\nabla\times\bv{\alpha})^T)/2$ \cite{kroner_allgemeine_1959}.
However, the relationship between $\bv{\eta}$ and the elastic strain $\bv{\mathcal{E}}_e$ is well-established \cite{kroner_allgemeine_1959,teodosiu_elastic_1982} and is expressed as
\begin{align}
  \label{eq:strain compatibility equation}
  \nabla\times \bv{\mathcal{E}}_e\times \nabla = -\bv{\eta}.
\end{align}
Based on the results, the linearised elastic strain $\bv{\mathcal{E}}_e$ represents the incompatibility tensor $\bv{\eta}$.
As $\bv{\eta}$ is a linear approximation of the Einstein curvature $\bv{G}$, which is a geometric frustration, the direct origin of the elastic strain $\bv{\mathcal{E}}_e$ can be considered a geometrical frustration itself.

\section{Conclusion}

Conventional theoretical analyses of dislocations have predominantly focused primarily on elastic fields, with minimal consideration given to plastic deformation.
This lack of attention can be attributed to the fact that both plastic and elastic deformations of dislocations are typically confined within Euclidean space, where the mathematical distinction between these deformation modes remains unclear.
In contrast, the present geometric theory extends the kinematics of dislocations to a broader mathematical framework known as the Riemann--Cartan manifold.
This extension allows for the mathematical separation of plastic and elastic deformations, facilitating a more detailed analysis of the characteristics of plastic deformation.
The results obtained in this study are summarised as follows:

\begin{enumerate}
    \item
    The geometrical theory of dislocation extended the kinematics of dislocations to Riemann--Cartan manifolds through the multiplicative decomposition of the total deformation $\bv{F}_t$ into plastic deformation $\bv{F}_p$ and elastic deformation $\bv{F}_e$. The intermediate state, which represents only the plastic deformation, was obtained by solving the Cartan's first structure equation for a specific dislocation density $\bv{\alpha}^i$. In this context, the Helmholtz decomposition enabled the extraction of the plastic displacement gradient $\bv{\Theta}^i$, which is fundamental to plastic deformation fields. Consequently, Cartan's equation and the divergence-free condition for the plastic displacement gradient aligned with Amp\`ere 's and Gauss' laws in electromagnetics.
    \item 
    The mathematical equivalence between plastic deformation and electromagnetics allowed for the application of the analytical solution for the static magnetic field, as described by the well-known Biot--Savart law, to the plastic displacement gradient caused by a dislocation. In other words, the concept of the vector potential in electromagnetism could be directly utilised in the plastic mechanics analysis of dislocations. This equivalence in mathematical structures was independent of the dislocation type, allowing similar analytical solutions to be derived for both screw and edge dislocations. These findings were summarised in Theorems \ref{th:ScrewDislocation} and \ref{th:EdgeDislocation}.
    \item 
    Furthermore, the governing equations for the plastic deformation of dislocations aligned with the Cauchy--Riemann equations in complex function theory. This correspondence suggested that the potential for plastic deformation could be expressed as a complex function, as demonstrated in theorem \ref{thm:PlasticPotential}. When expressed in polar coordinates, the real part of the complex potential resulted in a multivalued function discretised in integer multiples of the Burgers vector. This indicated that dislocations were not merely geometrical defects resulting from changes in the Riemannian metric but possessed distinct differential topological properties.
    \item 
    The mathematical basis for the equivalence between the equations describing the plastic deformation of dislocations --- particularly, Cartan's equation and the divergence-free condition --- and those in electromagnetism and complex function theory, relied on the fact that the plastic displacement gradient possessed a mathematical structure known as an $\mathbb{R}^3$-valued 1-form. Consequently, the plastic deformation of dislocations was distributed in a plane orthogonal to the dislocation line, with the in-plane deformation exhibiting the properties of an angle-preserving conformal map. The consistent orthogonality properties of plastic deformation, regardless of dislocation type, were seen as a reflection of the underlying equivalence in mathematical structure.
    \item 
    Moreover, the plastic deformation of the dislocation obtained through the aforementioned analysis generated an elastic stress field around the dislocation. This mechanical field was established by solving the stress equilibrium equation, with a linear approximation being utilised to analytically derive the stress field. The outcomes closely aligned with those predicted by the Volterra dislocation theory, underscoring the quantitative accuracy of the geometric analysis of dislocations. Furthermore, the direct source of the stress field surrounding the dislocation can be attributed to the geometric frustration characterised by the Einstein curvature.
\end{enumerate}

\section*{Acknowledgment}
This research was supported in part by JST, PRESTO Grant Number JPMJPR1997 Japan, and JSPS KAKENHI Grant Numbers JP18H05481 and JP23K13221.

\bibliographystyle{RS}

\begin{thebibliography}{99}

\bibitem{olver_applications_1986}
Olver PJ. 1986 {\em Applications of {Lie} {Groups} to {Differential} {Equations}} vol. 107{\em Graduate {Texts} in {Mathematics}}.
New York, NY: Springer New York.

\bibitem{marcinkowski_correspondence_1989}
Marcinkowski MJ. 1989  Correspondence between vortices and dislocations. {\em physica status solidi (b)} \textbf{152}, 9--31.

\bibitem{marcinkowski_similarities_1992}
Marcinkowski MJ. 1992  Similarities in the low-energy configurations of dislocations and vortices. {\em Journal of Materials Science} \textbf{27}, 3726--3732.

\bibitem{lazar_correspondence_2004}
Lazar M. 2004  On the correspondence between a screw dislocation in gradient elasticity and a regularized vortex. {\em Annalen der Physik} \textbf{516}, 617--622.

\bibitem{silbermann_analogies_2017}
Silbermann CB, Ihlemann J. 2017  Analogies between continuum dislocation theory, continuum mechanics and fluid mechanics. {\em IOP Conference Series: Materials Science and Engineering} \textbf{181}, 012037.

\bibitem{bar-sinai_geometric_2020}
Bar-Sinai Y, Librandi G, Bertoldi K, Moshe M. 2020  Geometric charges and nonlinear elasticity of two-dimensional elastic metamaterials. {\em Proceedings of the National Academy of Sciences} \textbf{117}, 10195--10202.

\bibitem{pretko_fracton-elasticity_2018}
Pretko M, Radzihovsky L. 2018a  Fracton-{Elasticity} {Duality}. {\em Physical Review Letters} \textbf{120}, 195301.

\bibitem{pretko_symmetry-enriched_2018}
Pretko M, Radzihovsky L. 2018b  Symmetry-{Enriched} {Fracton} {Phases} from {Supersolid} {Duality}. {\em Physical Review Letters} \textbf{121}, 235301.

\bibitem{pretko_crystal--fracton_2019}
Pretko M, Zhai Z, Radzihovsky L. 2019  Crystal-to-fracton tensor gauge theory dualities. {\em Physical Review B} \textbf{100}, 134113.

\bibitem{gromov_colloquium_2024}
Gromov A, Radzihovsky L. 2024  \textit{{Colloquium}} : {Fracton} matter. {\em Reviews of Modern Physics} \textbf{96}, 011001.

\bibitem{sun_fractional_2021}
Sun K, Mao X. 2021  Fractional {Excitations} in {Non}-{Euclidean} {Elastic} {Plates}. {\em PHYSICAL REVIEW LETTERS} p.~6.

\bibitem{hull_introduction_2011}
Hull D, Bacon DJ. 2011 {\em Introduction to {Dislocations}}.
Butterworth-Heinemann.

\bibitem{anderson_theory_2017}
Anderson PM, Hirth JP, Lothe J. 2017 {\em Theory of dislocations}.
Cambridge: Cambridge University Press 2017 edition edition.

\bibitem{volterra_sur_1907}
Volterra V. 1907  Sur l'équilibre des corps élastiques multiplement connexes. {\em Annales scientifiques de l'École normale supérieure} \textbf{24}, 401--517.

\bibitem{kondo_non-riemannian_1955}
Kondo K. 1955a {\em Non-{Riemannian} geometry of imperfect crystals from a macroscopic viewpoint} vol.~1{\em {RAAG} {Memoirs} of the {Unifying} {Study} of {Basic} {Problems} in {Engineering} and {Physical} {Sciences} by {Means} of {Geometry}}.
Gakujutsu Bunken Fukyukai.
Division D-I.

\bibitem{kondo_geometry_1955}
Kondo K. 1955b {\em Geometry of elastic deformation and incompatibility} vol.~1{\em {RAAG} {Memoirs} of the {Unifying} {Study} of {Basic} {Problems} in {Engineering} and {Physical} {Sciences} by {Means} of {Geometry}}.
Gakujutsu Bunken Fukyukai.
Devision C-I.

\bibitem{bilby_continuous_1955}
Bilby BA, Bullough R, Smith E. 1955  Continuous distributions of dislocations: a new application of the methods of non-{Riemannian} geometry. {\em Proceedings of the Royal Society of London. Series A. Mathematical and Physical Sciences} \textbf{231}, 263--273.

\bibitem{kroner_nicht-lineare_1959}
Kröner E, Seeger A. 1959  Nicht-lineare {Elastizitätstheorie} der {Versetzungen} und {Eigenspannungen}. {\em Archive for Rational Mechanics and Analysis} \textbf{3}, 97--119.

\bibitem{kroner_allgemeine_1959}
Kröner E. 1959  Allgemeine {Kontinuumstheorie} der {Versetzungen} und {Eigenspannungen}. {\em Archive for Rational Mechanics and Analysis} \textbf{4}, 273--334.

\bibitem{amari_primary_1962}
Amari S. 1962  On {Some} {Primary} {Structures} of {Non}-{Riemannian} {Plasticity} {Theory}. {\em RAAG Memoirs} \textbf{3}, 163--172.

\bibitem{yavari_riemann--cartan_2012}
Yavari A, Goriely A. 2012  Riemann--{Cartan} {Geometry} of {Nonlinear} {Dislocation} {Mechanics}. {\em Archive for Rational Mechanics and Analysis} \textbf{205}, 59--118.

\bibitem{yavari_riemann--cartan_2013}
Yavari A, Goriely A. 2013  Riemann--{Cartan} geometry of nonlinear disclination mechanics. {\em Mathematics and Mechanics of Solids} \textbf{18}, 91--102.

\bibitem{yavari_geometry_2014}
Yavari A, Goriely A. 2014  The geometry of discombinations and its applications to semi-inverse problems in anelasticity. {\em Proceedings of the Royal Society A: Mathematical, Physical and Engineering Sciences} \textbf{470}, 20140403--20140403.

\bibitem{kobayashi_geometrical_2024}
Kobayashi S, Tarumi R. 2024  Geometrical frustration in nonlinear mechanics of screw dislocation. {\em under review}.

\bibitem{schaefer_spannungsfunktionen_1953}
Schaefer H. 1953  Die {Spannungsfunktionen} des dreidimensionalen {Kontinuums} und des elastischen {Körpers}. {\em ZAMM - Zeitschrift für Angewandte Mathematik und Mechanik} \textbf{33}, 356--362.

\bibitem{minagawa_riemannian_1962}
Minagawa S. 1962  Riemannian three-dimensioal stress-function space. {\em RAAG Memoirs} \textbf{3}, 69--81.

\bibitem{amari_dual_1963}
Amari S, Kou K. 1963  Dual {Dislocations} and {Non}-{Riemmannian} {Stress} {Space}. {\em RAAG Research Notes, 3rd Ser.} \textbf{82}.

\bibitem{clayton_defects_2015}
Clayton J. 2015  Defects in nonlinear elastic crystals: differential geometry, finite kinematics, and second-order analytical solutions: {Defects} in nonlinear elastic crystals. {\em ZAMM - Journal of Applied Mathematics and Mechanics / Zeitschrift für Angewandte Mathematik und Mechanik} \textbf{95}, 476--510.

\bibitem{edelen_gauge_1988}
Edelen D, Lagoudas D. 1988 {\em Gauge theory and defects in solids}.
Mechanics and {Physics} of {Discrete} {Systems}. Burlington, MA: Elsevier.

\bibitem{acharya_model_2001}
Acharya A. 2001  A model of crystal plasticity based on the theory of continuously distributed dislocations. {\em Journal of the Mechanics and Physics of Solids} \textbf{49}, 761--784.

\bibitem{acharya_driving_2003}
Acharya A. 2003  Driving forces and boundary conditions in continuum dislocation mechanics. {\em Proceedings of the Royal Society of London. Series A: Mathematical, Physical and Engineering Sciences} \textbf{459}, 1343--1363.

\bibitem{roy_finite_2005}
Roy A, Acharya A. 2005  Finite element approximation of field dislocation mechanics. {\em Journal of the Mechanics and Physics of Solids} \textbf{53}, 143--170.

\bibitem{acharya_structure_2019}
Acharya A, Knops R, Sivaloganathan J. 2019  On the structure of linear dislocation field theory. {\em Journal of the Mechanics and Physics of Solids} \textbf{130}, 216--244.

\bibitem{lazar_elastoplastic_2002}
Lazar M. 2002  An elastoplastic theory of dislocations as a physical field theory with torsion. {\em Journal of Physics A: Mathematical and General} \textbf{35}, 1983--2004.

\bibitem{lazar_nonsingular_2003}
Lazar M. 2003  A nonsingular solution of the edge dislocation in the gauge theory of dislocations. {\em Journal of Physics A: Mathematical and General} \textbf{36}, 1415--1437.

\bibitem{dewit_theory_1973}
deWit R. 1973a  Theory of disclinations: {II}. {Continuous} and discrete disclinations in anisotropic elasticity. {\em Journal of Research of the National Bureau of Standards Section A: Physics and Chemistry} \textbf{77A}, 49.

\bibitem{dewit_theory_1973-1}
deWit R. 1973b  Theory of disclinations: {III}. {Continuous} and discrete disclinations in isotropic elasticity. {\em Journal of Research of the National Bureau of Standards Section A: Physics and Chemistry} \textbf{77A}, 359.

\bibitem{arora_finite_2020}
Arora R, Zhang X, Acharya A. 2020  Finite element approximation of finite deformation dislocation mechanics. {\em Computer Methods in Applied Mechanics and Engineering} \textbf{367}, 113076.

\bibitem{nye_geometrical_1953}
Nye JF. 1953  Some geometrical relations in dislocated crystals. {\em Acta Metallurgica} \textbf{1}, 10.

\bibitem{tu_differential_2017}
Tu LW. 2017 {\em Differential geometry: connections, curvature, and characteristic classes}.
New York, NY: Springer Science+Business Media.

\bibitem{wenzelburger_kinematic_1998}
Wenzelburger J. 1998  A kinematic model for continuous distributions of dislocations. {\em Journal of Geometry and Physics} \textbf{24}, 334--352.

\bibitem{schwarz_hodge_1995}
Schwarz G. 1995 {\em Hodge decomposition: a method for solving boundary value problems}.
Number 1607 in Lecture notes in mathematics. Berlin ; New York: Springer-Verlag.

\bibitem{marsden_mathematical_1984}
Marsden JE, Hughes TJR. 1994 {\em Mathematical {Foundations} of {Elasticity}}.
New York: Dover Publications.

\bibitem{grubic_equations_2014}
Grubic N, LeFloch PG, Mardare C. 2014  The equations of elastostatics in a {Riemannian} manifold. {\em Journal de Mathématiques Pures et Appliquées} \textbf{102}, 1121--1163.

\bibitem{griffiths_introduction_2023}
Griffiths DJ. 2023 {\em Introduction to {Electrodynamics}}.
Cambridge University Press 5 edition.

\bibitem{de_wit_view_1981}
de~Wit R. 1981  A view of the relation between the continuum theory of lattice defects and non-euclidean geometry in the linear approximation. {\em International Journal of Engineering Science} \textbf{19}, 1475--1506.

\bibitem{lazar_dislocations_2006}
Lazar M, Maugin GA, Aifantis EC. 2006  Dislocations in second strain gradient elasticity. {\em International Journal of Solids and Structures} \textbf{43}, 1787--1817.

\bibitem{arfken_mathematical_2013}
Arfken GB, Weber HJ, Harris FE. 2013 {\em Mathematical methods for physicists}.
Amsterdam: Elsevier 7th edition.

\bibitem{mura_micromechanics_1987}
Mura T. 1987 {\em Micromechanics of defects in solids} vol.~3{\em Mechanics of {Elastic} and {Inelastic} {Solids}}.
Dordrecht: Springer Netherlands.

\bibitem{nakahara_geometry_2003}
Nakahara M. 2003 {\em Geometry, {Topology} and {Physics}, {Second} {Edition}}.
Taylor \& Francis.

\bibitem{teodosiu_elastic_1982}
Teodosiu C. 1982 {\em Elastic models of crystal defects}.
Bucureşti , Berlin ; New York: Editura Academiei , Springer-Verlag.

\end{thebibliography}

\end{document}